
\documentclass[12pt,a4paper]{article}


\usepackage{amsmath,amsfonts,amssymb,amsthm}
\usepackage{mathtools}
\usepackage[utf8]{inputenc}
\usepackage[osf,noBBpl]{mathpazo}
\usepackage{microtype}
\usepackage{xcolor}
\definecolor{darkgreen}{rgb}{0,0.6,0}

\usepackage[pdftex,bookmarks=true,bookmarksnumbered=true,pdfpagemode=UseOutlines,pdfstartview=FitH,%
    pdfpagelayout=OneColumn,colorlinks=true,urlcolor=blue,citecolor=darkgreen,pdfborder={0 0 0},ocgcolorlinks]{hyperref}

\usepackage{amsmath,amsfonts,amssymb,amsthm}
\usepackage{mathtools}
\usepackage{bm}

\theoremstyle{plain}
\newtheorem{theorem}{Theorem}
\newtheorem{lemma}[theorem]{Lemma}
\newtheorem{proposition}[theorem]{Proposition}
\newtheorem{corollary}[theorem]{Corollary}
\newtheorem{algo}{Algorithm}

\theoremstyle{definition}
\newtheorem{definition}[theorem]{Definition}

\newcommand\refsec[1]{Sec.~\ref{sec:#1}}

\newcommand\KK{\mathbb K}
\newcommand\QQ{\mathbb Q}
\newcommand\CC{\mathbb C}
\newcommand\ZZ{\mathbb Z}
\newcommand\NN{\mathbb N}

\renewcommand\P{\mathsf P}
\newcommand\NP{\mathsf{NP}}

\newcommand\closure\overline
\newcommand\mult[2]{\operatorname{mult}_{#1}(#2)}
\newcommand\puiseux[2][\closure\KK]{#1\langle\!\langle#2\rangle\!\rangle}

\newcommand\divides{|}

\newcommand\I{\mathcal I}
\newcommand\J{\mathcal J}
\renewcommand\O{\mathcal O}

\DeclareMathOperator\val{val}
\DeclareMathOperator\Newt{Newt}
\DeclareMathOperator\Supp{Supp}
\DeclareMathOperator\res{res}
\renewcommand\wr{\operatorname{wr}}
\DeclareMathOperator\size{size}


\title{Computing low-degree factors of lacunary polynomials:\\ a Newton-Puiseux approach}

\author{Bruno Grenet\thanks{Supported by the LIX-Qualcomm-Carnot fellowship.}\\
    LIX -- UMR 7161\\
    \'Ecole Polytechnique\\
    91\,128 Palaiseau Cedex, France\\
    \href{mailto:bruno.grenet@lix.polytechnique.fr}{\nolinkurl{bruno.grenet@lix.polytechnique.fr}}
}
\date{May 8, 2014}
\makeatletter
\hypersetup{
    pdftitle = {\@title},
    pdfauthor = {Bruno Grenet}
}
\makeatother

\begin{document}

\maketitle

\begin{abstract}
We present a new algorithm for the computation of the irreducible factors of degree at most $d$, with multiplicity, of multivariate lacunary polynomials over fields of characteristic zero. The algorithm reduces this computation to the computation of irreducible factors of degree at most $d$ of univariate lacunary polynomials and to the factorization of low-degree multivariate polynomials. The reduction runs in time polynomial in the size of the input polynomial and in $d$. As a result, we obtain a new polynomial-time algorithm for the computation of low-degree factors, with multiplicity, of multivariate lacunary polynomials over number fields, but our method also gives partial results for other fields, such as the fields of $p$-adic numbers or for absolute or approximate factorization for instance.

The core of our reduction uses the Newton polygon of the input polynomial, and its validity is based on the Newton-Puiseux expansion of roots of bivariate polynomials. In particular, we bound the valuation of $f(X,\phi)$ where $f$ is a lacunary polynomial and $\phi$ a Puiseux series whose vanishing polynomial has low degree. 
\end{abstract}

\section{Introduction} 

This article proposes a new algorithm for computing low-degree factors of lacunary polynomials over fields of characteristic $0$. The \emph{lacunary representation} of a polynomial 
\[f(X_1,\dotsc,X_n)=\sum_{j=1}^k c_j X_1^{\alpha_{1,j}}\dotsb X_n^{\alpha_{n,j}}\]
is the list $\{(c_j,\alpha_{1,j},\dotsc,\alpha_{n,j}):1\le j\le k\}$. We define the lacunary size of $f$, denoted by $\size(f)$, as the size of the binary representation of this list. It takes into account the size of the coefficients, and thus depends on the field they belong to. An important remark is that the size is proportional to the logarithm of the degree. 

Over algebraic number fields, the factorization problem can be solved in time polynomial in the degree of the input polynomial (see for instance~\cite{vzGaGe03} and references therein). It is also the case of absolute factorization, that is factorization over the algebraic closure of $\QQ$~\cite{CheGa05}. In the case of lacunary polynomials, these algorithms are not adapted since they are exponential in the size of the representation. 

Actually, the computation of the irreducible factorization of a polynomial given in lacunary representation cannot be performed in polynomial time. For instance over $\QQ$, the polynomial $X^p-1$ has a size of order $\log(p)$, while one of its irreducible factors, namely $(1+X+\dotsb+X^{p-1})$, has a size of order $p$. 

Therefore, a natural restriction consists in computing low-degree factors only. A line of work yielded an algorithm that, given a lacunary polynomial $f\in\KK[X_1,\dotsc,X_n]$ and an integer $d$ as input, where $\KK$ is an algebraic number field, computes all the irreducible factors of $f$ of degree at most $d$ in time polynomial in $\size(f)$ and $d$~\cite{CuKoiSma99,Len99,KaKoi05,KaKoi06}. These results are based on Gap Theorems showing that the desired factors of a polynomial $f=\sum_{j=1}^k c_j \bm X^{\bm\alpha_j}$ must divide both $\sum_{j=1}^\ell c_j \bm X^{\bm\alpha_j}$ and $\sum_{j=\ell+1}^k c_j \bm X^{\bm\alpha_j}$ for some index $\ell$. 
This allows to reduce the computation to the case of low-degree polynomials, for which one applies the classical algorithms. These Gap Theorems are based on number-theoretic results.

We recently proposed a new approach for this problem and gave a new algorithm for the computation of the multilinear factors of multivariate lacunary polynomials~\cite{ChaGreKoiPoStr13,ChaGreKoiPoStr14}. The algorithm we obtained is simpler and faster than the previous ones. Moreover, since it is not based on number-theoretic results, it can be used for a larger range of fields, for instance for absolute  or approximate factorization, or for finite fields of large characteristic. 
In this paper, we propose a generalization of this algorithm to the case of factors of degree at most $d$. We briefly explain the new approach in the simplest case of linear factors of bivariate polynomials. 

Let $f=\sum_{j=1}^k c_j X^{\alpha_j}Y^{\beta_j}\in\KK[X,Y]$ for some field $\KK$ of characteristic $0$, with $\alpha_j\le\alpha_{j+1}$ for all $j<k$. A linear polynomial $(Y-uX-v)$ divides $f$ if and only if $f(X,uX+v)=0$. We proved that for $uv\neq0$, if $f(X,uX+v)$ is nonzero then its valuation, that is the largest power of $X$ dividing it, is bounded by $\alpha_1+\binom{k}{2}$. From this, we deduced a Gap Theorem: Suppose that there exists an index $\ell<k$ such that $\alpha_{\ell+1}>\alpha_1+\binom{\ell}{2}$ and let $f_1=\sum_{j=1}^\ell c_j X^{\alpha_j}Y^{\beta_j}$ and $f_2=\sum_{j=\ell+1}^k c_j X^{\alpha_j}Y^{\beta_j}$. Then for all $uv\neq0$, $f(X,uX+v)=0$ if and only if $f_1(X,uX+v)=f_2(X,uX+v)=0$. 
In other words, $(Y-uX-v)$ divides $f$ if and only if it divides both $f_1$ and $f_2$. From this Gap Theorem, an algorithm for computing linear factors $(Y-uX-v)$ with $uv\neq0$ follows quite easily: Apply the Gap Theorem recursively to express $f$ as a sum of low-degree polynomials, and compute their common linear factors using any classical factorization algorithm. The computation of the remaining possible linear factors such as $(Y-uX)$ or $(X-v)$ reduces to univariate lacunary factorization. 

To use the same strategy with degree-$d$ factors, we need some new ingredients. First, we view a degree-$d$ bivariate irreducible polynomial $g\in\KK[X,Y]$ as a polynomial in $Y$ whose coefficients are polynomials in $X$. The roots of $g$ can be expressed in an algebraic closure of $\KK[X]$ using the notion of \emph{Puiseux series}. If $\phi$ is such a root of $g$, then $g$ divides $f\in\KK[X,Y]$ if and only if $f(X,\phi)=0$. We give a bound on the valuation of such an expression where $f$ is a lacunary polynomial. This yields a new Gap Theorem. Yet, the bound and the Gap Theorem depend on the valuation of the root $\phi$ itself. This means that there are actually as many Gap Theorems as there are possible valuations of $\phi$. A second ingredient is the use of the \emph{Newton polygon} of $f$ to a priori compute these valuations. As in the case of linear factors, there are some special cases reducing to the univariate case, namely the \emph{weighted homogeneous} factors. 
The computation of these factors too makes use of the Newton polygon of $f$.

In what follows, we give two algorithms: We first show how to compute the weighted ho\-mo\-ge\-neous factors, given an oracle to compute the irreducible factors of degree at most $d$ of a univariate lacunary polynomial. The second algorithm reduces the computation of the other factors, called \emph{inhomogeneous}, to the factorization of some bivariate low-degree polynomials. 

Using both algorithms yields our first main result. 

\begin{theorem}\label{thm:bivar}
Let $\KK$ be any field of characteristic $0$. Given a lacunary polynomial $f\in\KK[X,Y]$ of degree $D$ with $k$ non\-zero terms and an integer $d$, the computation of the irreducible factors of degree at most $d$ of $f$, with multiplicity, reduces to
\begin{itemize}
\item the computation of the irreducible factors of degree at most $d$ of $k/2$ lacunary polynomials of $\KK[X]$ plus $d^{\O(1)}$ bit operations per factor in post-processing, and
\item the factorization of $\O(k^3)$ polynomials of $\KK[X,Y]$ of total degree sum at most $\O(d^4k^4)$,
\end{itemize}
plus at most $(k\log D+d)^{\O(1)}$ bit operations.
\end{theorem}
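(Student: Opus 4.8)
As announced in the introduction, the plan is to treat separately the \emph{weighted homogeneous} irreducible factors --- those whose Newton polygon is a single segment, i.e.\ whose monomials all lie on a common line $w_1i+w_2j=c$ for some primitive $(w_1,w_2)\in\ZZ^2$ --- and the \emph{inhomogeneous} ones, building one reduction for each. The first task is to check this dichotomy is exhaustive and to attach to every irreducible factor $g$ of $f$ with $\deg g\le d$ the combinatorial datum of $\Newt(f)$ it depends on. If $g$ is weighted homogeneous for a weight $w$, then writing $f=gh$ and comparing $w$-initial forms gives $g\mid\operatorname{in}_w(f)$; since $g$ is a non-monomial irreducible, the face of $\Newt(f)$ carrying $\operatorname{in}_w(f)$ is not a vertex, hence it is an edge $e$ and $g$ divides the edge polynomial $f_e$. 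If $g$ is inhomogeneous, view it in $\KK(X)[Y]$ where it stays irreducible, pick a Puiseux root $\phi\in\puiseux{X}$ of $g$, and use $g\mid f$ to get $f(X,\phi)=0$; then $\val_X(\phi)$ is one of the $\O(k)$ slopes of the Newton polygon of $f$ seen as a polynomial in $Y$. So the factors we seek are indexed by the ($\le k$) edges of $\Newt(f)$ together with the ($\O(k)$) admissible root-valuations $\rho$, and each index receives its own reduction.

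\medskip\noindent\textbf{Weighted homogeneous factors.} For an edge $e$ with primitive normal $(w_1,w_2)$, the edge polynomial $f_e$ is weighted homogeneous, so dividing out its common monomial and substituting the single monomial $Z=X^{w_2}Y^{-w_1}$ turns it into a univariate lacunary $\bar f_e\in\KK[Z]$ with at most $|e|$ terms; one feeds $\bar f_e$ to the univariate oracle, lifts each returned factor of low enough degree to a weighted homogeneous $g$, and tests $g\mid f$ by reducing $f$ modulo $g$, i.e.\ evaluating $f$ at a root of $g$ via fast exponentiation on the $\beta_i$, which is legitimate since $\log D=\size(f)^{\O(1)}$. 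Edges whose edge polynomial is a binomial do not need the oracle (their low-degree factors are factors of a univariate binomial, computable in $(d+\log D)^{\O(1)}$ bit operations), and a counting argument on $\Newt(f)$ --- each non-binomial edge carries at least three terms of $f$, while edge endpoints are shared --- bounds the number of remaining edges by $k/2$, whence the $k/2$ oracle calls in the statement. For multiplicities: for a weighted homogeneous $g$ with monomial root $\phi=cX^\rho$, $\mult{g}{f}$ equals the least $e$ with $\partial_Y^e f(X,\phi)\neq0$; since $\partial_Y^e f$ still has at most $k$ terms, a Vandermonde argument on its exponents forces this $e$ below $k$, so $\O(k)$ evaluations suffice --- the $d^{\O(1)}$ per-factor post-processing. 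The monomial factors $X$ and $Y$ are read off directly.

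\medskip\noindent\textbf{Inhomogeneous factors.} Fix an admissible valuation $\rho=-p/q$ in lowest terms with $q\le d$. The key input is a \emph{valuation bound}: if $f=\sum_j c_jX^{\alpha_j}Y^{\beta_j}$ is lacunary, $\phi\in\puiseux{X}$ has valuation $\rho$, and $\phi$ satisfies a polynomial of degree at most $d$ over $\KK(X)$, then either $f(X,\phi)=0$ or $\val_X f(X,\phi)\le\min_j\delta_j+C(k,d)$, where $\delta_j=\alpha_j+\rho\beta_j$ and $C$ is a bound polynomial in its arguments. Granting this, order the terms so that $\delta_1\le\dots\le\delta_k$: whenever $\delta_{\ell+1}>\delta_1+C(\ell,d)$ one obtains a \emph{Gap}, because splitting $f=f_1+f_2$ after the $\ell$-th term makes $\val_X f_1(X,\phi)\le\delta_1+C(\ell,d)$ if $f_1(X,\phi)\neq0$, while $\val_X f_2(X,\phi)\ge\delta_{\ell+1}$ always, so $f(X,\phi)=0$ forces $f_1(X,\phi)=f_2(X,\phi)=0$, i.e.\ $g\mid f_1$ and $g\mid f_2$. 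Applying this recursively for each of the $\O(k)$ valuations $\rho$, and combining it with the symmetric statement in the other variable so as to also exploit gaps in the $Y$-exponents, breaks $f$ into $\O(k^3)$ pieces of bounded $\delta$-spread; after the ramification $X\mapsto T^q$ and twist $Y\mapsto T^{-p}Y$ that make $\phi$ an integral series, each piece is a genuine polynomial of $\KK[X,Y]$ and the degrees sum to $\O(d^4k^4)$. Factoring these with any classical algorithm, which returns the irreducible factors with multiplicity, recovers --- as the factors shared by all pieces of a given partition, collected over all $\rho$ --- every inhomogeneous factor of $f$; their multiplicities follow from $\mult{g}{f}=\min_i\mult{g}{f_i}$ over the pieces, valid because each $\partial_Y^e$ merely shifts every $\delta_j$ by a common constant and hence preserves every Gap. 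Merging the two families and discarding duplicates gives the theorem, the remaining bookkeeping fitting in $(k\log D+d)^{\O(1)}$ bit operations.

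\medskip\noindent\textbf{The main obstacle.} The crux, and the step I expect to need the most care, is the valuation bound $\val_X f(X,\phi)\le\min_j\delta_j+C(k,d)$. I would prove it with a Wronskian argument on the functions $h_j=X^{\alpha_j}\phi^{\beta_j}$, whose valuations are the $\delta_j$: if $f(X,\phi)=\sum_j c_jh_j$ is nonzero, extract a maximal $\KK$-linearly independent subfamily among those with $c_j\neq0$, relabel it $h_1,\dots,h_r$ and reindex so that $f(X,\phi),h_2,\dots,h_r$ are linearly independent; then $\wr(f(X,\phi),h_2,\dots,h_r)$ is a nonzero scalar multiple of $\wr(h_1,\dots,h_r)$, and comparing $X$-valuations on the two sides bounds $\val_X f(X,\phi)$ in terms of the $\delta_j$ and of $\val_X\wr(h_1,\dots,h_r)$. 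The delicate point is bounding that last Wronskian: factoring out $h_1\cdots h_r$ expresses it through $1/X$ and the logarithmic derivative $\phi'/\phi$ together with its iterated derivatives, and one must use that $\phi$ satisfies a degree-$\le d$ equation over $\KK(X)$ --- so that $\phi'/\phi$ is an algebraic function whose complexity is controlled by $d$ --- to keep the valuation of that determinant polynomial in $k$ and $d$. Obtaining such a bound in a purely field-independent way is exactly what makes the method work over any field of characteristic zero, and in the approximate or absolute settings advertised in the abstract.
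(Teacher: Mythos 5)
Your overall strategy matches the paper's: split factors into weighted homogeneous and inhomogeneous, drive the second case by a Wronskian bound on $\val_X f(X,\phi)$ yielding a Gap Theorem, and index everything by edges of $\Newt(f)$. The way you identify the valuation bound as the crux, the linear-independence reduction via a maximal subfamily, and the resultant-style argument to control the logarithmic-derivative Wronskian are all exactly the paper's Lemmas~\ref{lemma:wronskian} and \ref{lemma:polySeries} and the proof of Theorem~\ref{thm:val}. But there are two places where your version diverges in a way that needs repair.

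\textbf{Weighted homogeneous factors.} You work only with the edge polynomial $f_e$ (one $(p,q)$-initial form), recover its univariate factors from the oracle, and then test $g\mid f$ and compute $\mult{g}{f}$ a posteriori. The paper instead decomposes $f$ into \emph{all} its $(p,q)$-homogeneous components $f_1,\dotsc,f_s$, oraclizes each, and uses Theorem~\ref{thm:quasihom}: $\mult{g}{f}=\min_t\mult{g}{f_t}$. This makes the multiplicity fall out of the oracle's answers with no further evaluation. Your alternative needs a separate multiplicity computation, and the argument you sketch for it is incomplete: the claim that $\mult{g}{f}<k$ via ``a Vandermonde argument on the exponents'' of $\partial_Y^e f$ implicitly assumes the $\beta_j$ are pairwise distinct, since the matrix $(\beta_j^{\underline e})_{e,j}$ is singular when two $\beta_j$ coincide. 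With a bivariate $f$ repeated $\beta_j$'s are common, and the argument as stated breaks. (Also, your per-factor post-processing is $(k\log D)^{\O(1)}$ from the fast exponentiations, not $d^{\O(1)}$; this is still absorbable into the global $(k\log D+d)^{\O(1)}$ term, so it is a misattribution rather than a complexity error, but the paper's bookkeeping is cleaner because the only post-processing needed is the $(p,q)$-homogenization of the oracle's output.)

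\textbf{Inhomogeneous factors.} You obtain the second linear constraint on the exponents from ``the symmetric statement in the other variable'' ($Y$-Puiseux expansions), whereas the paper uses a second \emph{non-parallel edge} of $\Newt(g)$, i.e.\ two distinct $X$-root valuations $v_1\neq v_2$, and a short case analysis (two lower-hull edges, two upper-hull edges, or a vertical edge) to guarantee such a pair exists for any non-weighted-homogeneous $g$. Your variant can be made to work, but the final conversion to low-degree polynomials is wrong as written: the substitution $X\mapsto T^q$, $Y\mapsto T^{-p}Y$ sends $X^{\alpha_j}Y^{\beta_j}$ to $T^{q\alpha_j-p\beta_j}Y^{\beta_j}$, whose $Y$-degree is still $\beta_j$ and can be as large as $D$. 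The ramification/twist is what one uses inside the \emph{proof} of the valuation bound to make $\phi$ integral; it is not what makes the pieces low-degree. What makes them low-degree is that the two independent linear constraints $|\,(\alpha_{j_1}+v_i\beta_{j_1})-(\alpha_{j_2}+v_i\beta_{j_2})\,|\le\Delta_i\binom{\ell_t-1}{2}$, for $i=1,2$ with $v_1\neq v_2$, bound $|\alpha_{j_1}-\alpha_{j_2}|$ and $|\beta_{j_1}-\beta_{j_2}|$ separately (via Lemma~\ref{lemma:slopes}), after which one simply \emph{normalizes} $f_t$ by dividing out $X^{\val_X f_t}Y^{\val_Y f_t}$. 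This normalization, not the ramification, is the operation that produces the polynomials sent to the low-degree factorization oracle.

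So: same skeleton, correct identification of the key valuation bound, but your weighted-homogeneous treatment has a genuine gap in the multiplicity bound, and your inhomogeneous treatment confuses the ramification (a proof device) with the normalization (the algorithmic step).
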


In the multivariate case, we cannot directly apply the same algorithm as in the bivariate case since the resulting algorithm would be exponential in the number of variables. Nevertheless, we can prove the following result.

\begin{theorem}\label{thm:multivar}
Let $\KK$ be any field of characteristic $0$. Given a lacunary polynomial $f\in\KK[X_1,\dotsc,X_n]$ of degree $D$ with $k$ nonzero terms and an integer $d$, the computation of the irreducible factors of degree at most $d$ of $f$, with multiplicity, reduces to
\begin{itemize}
\item the computation of the irreducible factors of degree at most $d$ of $(nk)^{\O(1)}$ lacunary polynomials of $\KK[X]$ plus $(nd)^{\O(1)}$ bit operations per factor in post-processing, and
\item the factorization of $k$ polynomials of $\KK[X_1,\dotsc,X_n]$ of total degree sum at most $(nk\log(D)+d)^{\O(1)}$,
\end{itemize}
plus at most $(nk\log D+d)^{\O(1)}$ bit operations.
\end{theorem}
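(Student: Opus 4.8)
The plan is to reproduce the two-part structure used for Theorem~\ref{thm:bivar} --- treat separately the \emph{weighted homogeneous} factors (reduced to univariate lacunary factorization) and the \emph{inhomogeneous} ones (handled through a Gap Theorem) --- while ensuring that no step costs more than a polynomial in $n$. The single obstruction to a literal transposition of the bivariate argument is that the Newton polytope of a polynomial with $k$ terms in $n$ variables may have $k^{\Theta(n)}$ faces, so one cannot enumerate the admissible weight vectors (the inner normals of the faces of $\Newt(f)$) one by one. The way around this is to collapse a weight vector into a scalar weighting: wherever the bivariate algorithm looks at an edge of the Newton polygon, the multivariate algorithm instead applies a monomial substitution $X_i\mapsto T^{c_i}$ (to reach the univariate case) or, more generally, projects the exponent lattice onto $\ZZ^2$ via $X_i\mapsto X^{a_i}Y^{b_i}$ for small random integers $a_i,b_i$ and then invokes the bivariate machinery, already available through Theorem~\ref{thm:bivar}, on the resulting bivariate lacunary polynomial.

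For the weighted homogeneous factors the reduction is the bivariate one: if $g$ is $\bm w$-homogeneous and $g\divides f$, then after the substitution $X_i\mapsto T^{w_i}$ and division by the common power of $T$, $f$ becomes a univariate lacunary polynomial one of whose degree-$\le d$ factors lifts back to the dehomogenisation of $g$, and conversely each such univariate factor yields at most one candidate $\bm w$-homogeneous factor of $f$. What has to be argued is that the weight vectors $\bm w$ may be drawn from a family of size $(nk)^{\O(1)}$: a $\bm w$-homogeneous factor of degree $\le d$ is carried by a face of $\Newt(f)$ that involves only a bounded number of terms of $f$, so $\bm w$ only needs to separate a bounded number of supporting hyperplanes, and a $\bm w$ picked at random from a box of polynomial side length does so with controlled failure probability (Schwartz--Zippel), which one derandomises by sweeping a polynomial-size grid. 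This produces the $(nk)^{\O(1)}$ univariate lacunary instances; the $(nd)^{\O(1)}$ post-processing per factor is the cost of verifying, by multivariate pseudo-division, that a lifted candidate divides $f$ and of recovering its multiplicity.

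For the inhomogeneous factors I would establish a multivariate Gap Theorem: ordering the terms of $f$ by a suitable generic linear functional $\lambda$ on the exponent lattice and setting $B=(nk\log D+d)^{\O(1)}$, whenever two consecutive $\lambda$-values differ by more than $B$, every inhomogeneous degree-$\le d$ factor of $f$ divides both of the two resulting parts. The bound $B$ is obtained by pushing the bivariate valuation estimate of the paper through a random projection $X_i\mapsto X^{a_i}Y^{b_i}$: a Puiseux root of such a factor maps to a Puiseux root of a degree-$\le d$ factor of the projected bivariate polynomial, the bivariate Gap Theorem applies to the latter, and the splitting index pulls back to $f$. Splitting $f$ recursively at every gap yields at most $k$ pieces; one then has to check that each piece, after removing its common monomial and using the control provided by the generic functionals, is a genuine multivariate polynomial of total degree $(nk\log D+d)^{\O(1)}$ --- this last point is where $\log D$ enters the final estimate --- so that the pieces can be handed to a classical multivariate factorization algorithm, the inhomogeneous factors of $f$ being recovered as the common degree-$\le d$ factors of the pieces with the appropriate minimal multiplicity. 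The main obstacle, I expect, is the correctness of the randomised reductions: one must show that for generic $(a_i,b_i)$ and $(c_i)$ the projection neither merges two faces of $\Newt(f)$ that jointly witness a gap nor loses divisibility or multiplicity information, and one must bound the number of bad choices in terms of $k$, $d$ and $\log D$ so that coefficients of polynomial size suffice; the secondary difficulty is exactly the degree control of the pieces, which must be made quantitative to obtain the $(nk\log D+d)^{\O(1)}$ bound.
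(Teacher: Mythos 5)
Your proposal takes a genuinely different route from the paper, and it has two real gaps.

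The paper avoids random projections entirely. It classifies degree-$\le d$ factors into \emph{unidimensional} ones (support contained in a \emph{line} in $\ZZ^n$, not merely a hyperplane --- this is the correct $n$-variable analogue of the bivariate weighted-homogeneous class, and is stricter than ``$\bm w$-homogeneous'' as you use the term) and \emph{multidimensional} ones. For unidimensional factors it uses the $n-1$ coordinate-plane Newton polygons $\Newt_{1,i}(f)$ to eliminate the variables one at a time via the bivariate algorithm, and for multidimensional factors it applies the bivariate Gap Theorem to each $\Newt_{1,i}(f)$ separately, obtaining $n-1$ partitions of the index set $\{1,\dotsc,k\}$, and then \emph{merges} these partitions (connected components of the union of the equivalence relations). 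Superadditivity of $\ell\mapsto\ell^2$ is what caps the exponent spread of a merged block at $\O(nd^4k^2)$ in each coordinate. That deterministic merging lemma is the crux of the degree control in the theorem's second bullet.

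Your version replaces all of this with random monomial substitutions $X_i\mapsto T^{w_i}$ or $X_i\mapsto X^{a_i}Y^{b_i}$. Two concrete problems. First, the derandomization you sketch does not go through as stated: to ``sweep a polynomial-size grid'' in $n$ dimensions you would need each coordinate to range over a set of size $(nk)^{\O(1/n)}$, which is too small for Schwartz--Zippel; you would instead need a one-parameter family of Kronecker type, and then you would have to re-examine the degree blow-up it causes (weights of size $k^{\Theta(n)}$, say, would ruin the polynomial bound). Second, and more importantly, a single bivariate projection of the Gap Theorem only constrains two linear functionals $\sum_i a_i\alpha_{i,j}$, $\sum_i b_i\alpha_{i,j}$ of the exponent vectors within a block, which is insufficient to bound the total degree of the corresponding piece of $f$; you need $\Theta(n)$ such projections and a merging argument to turn per-projection constraints into a bound on every exponent coordinate. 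You gesture at ``generic functionals'' in the plural and explicitly flag the quantitative degree control as an unresolved ``secondary difficulty,'' but this difficulty is precisely the content of the theorem's second item, and the argument that resolves it (path-connectivity in the merged partition plus superadditivity) is absent. Until those two points are filled in, the proposal does not establish the stated complexity bounds.
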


In the case of number fields, Lenstra gave a polynomial-time algorithm to compute the factors of degree at most $d$ of a univariate lacunary polynomial~\cite{Len99}. For the low-degree factorization of a polynomial $g$, there exist deterministic algorithms that run in time $(\size(g)+\deg(g))^{\O(n)}$ and return the list of factors in lacunary representation~\cite{Kal85}, and randomized algorithms that run in time $(\size(g)+\deg(g))^{\O(1)}$ and return the factors as straight-line programs~\cite{Kal89} or blackboxes~\cite{KaTra90}. As a result, we obtain a new algorithm for the computation of factors of degree at most $d$ of multivariate lacunary polynomials over number fields, giving a new proof of the main result of~\cite{KaKoi06}.

\begin{corollary}
There exists an 
algorithm that, given as inputs an irreducible polynomial $\varphi\in\QQ[\xi]$ representing a number field $\KK=\QQ[\xi]/\langle\varphi\rangle$, a lacunary polynomial $f\in\KK[X_1,\dotsc,X_n]$ and an integer $d$, computes the lacunary representation of the irreducible factors of degree at most $d$ of $f$, with multiplicity, in deterministic time $(\size(f)+d)^{\O(n)}$.

If the factors are represented as straight-line programs or blackboxes, the algorithm is randomized and runs in time $(\size(f)+d)^{\O(1)}$.
\end{corollary}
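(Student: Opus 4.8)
The plan is to derive the corollary by instantiating the two reduction theorems (Theorem~\ref{thm:bivar} and Theorem~\ref{thm:multivar}) with the concrete subroutines available over number fields, and then to bound the total running time. First I would invoke Theorem~\ref{thm:multivar}: given $f\in\KK[X_1,\dotsc,X_n]$ of lacunary size $\size(f)$, it reduces the computation of the degree-$\le d$ factors (with multiplicity) to (i) the computation of degree-$\le d$ factors of $(nk)^{\O(1)}$ univariate lacunary polynomials in $\KK[X]$, (ii) the factorization of $k$ low-degree multivariate polynomials whose total degree sum is $(nk\log D+d)^{\O(1)}$, and (iii) a polynomial number of additional bit operations. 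Since $k\le\size(f)$ and $\log D\le\size(f)$, every ``$(nk\log D+d)^{\O(1)}$'' term is already $(\size(f)+d)^{\O(n)}$ (indeed $(\size(f)+d)^{\O(1)}$), so those contributions are harmless.

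Next I would plug in the known number-field subroutines. For step~(i), Lenstra's algorithm~\cite{Len99} computes the degree-$\le d$ irreducible factors of a univariate lacunary polynomial over a number field $\KK$ in time polynomial in the size of the input and in $d$; applying it to each of the $(nk)^{\O(1)}$ univariate instances, and noting each such instance has size $(\size(f))^{\O(1)}$, gives a total cost of $(\size(f)+d)^{\O(1)}$ for this step, plus the $(nd)^{\O(1)}$ bit operations per factor of post-processing from the theorem; since the number of factors is polynomially bounded, this is again $(\size(f)+d)^{\O(1)}$. For step~(ii), I would invoke the deterministic low-degree factorization algorithm of Kaltofen~\cite{Kal85}, which on a multivariate polynomial $g$ in $n$ variables runs in time $(\size(g)+\deg g)^{\O(n)}$ and outputs the factors in lacunary representation. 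Applying it to the $k$ bivariate (or $n$-variate, in the multivariate case) polynomials produced by the reduction, each of degree at most $(\size(f)+d)^{\O(1)}$ and size at most $(\size(f)+d)^{\O(1)}$, yields a cost of $(\size(f)+d)^{\O(n)}$ overall. Summing the three contributions gives the deterministic bound $(\size(f)+d)^{\O(n)}$ claimed in the first part of the corollary.

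For the randomized statement I would repeat the same argument but replace Kaltofen's deterministic algorithm in step~(ii) by the randomized algorithms of~\cite{Kal89,KaTra90}, which factor a low-degree polynomial in time $(\size(g)+\deg g)^{\O(1)}$ and return the factors as straight-line programs or blackboxes. The exponent $\O(n)$ in step~(ii) then collapses to $\O(1)$; every other contribution was already $(\size(f)+d)^{\O(1)}$, so the total running time becomes $(\size(f)+d)^{\O(1)}$. One should also observe that representing the output factors as straight-line programs or blackboxes is consistent with the rest of the pipeline: the univariate factors from Lenstra's algorithm have degree at most $d$ and so admit a trivial dense (hence straight-line or blackbox) representation, and the weighted-homogeneous/inhomogeneous recombination steps of Theorems~\ref{thm:bivar} and~\ref{thm:multivar} manipulate these low-degree objects using only $(nk\log D+d)^{\O(1)}$ arithmetic, preserving the representation size.

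The main obstacle I expect is bookkeeping rather than a genuine mathematical difficulty: one must check carefully that every quantity appearing in the reductions — the number of univariate instances, their individual sizes, the degrees and coefficient sizes of the low-degree multivariate polynomials, and the number of output factors — is bounded by a fixed polynomial in $\size(f)$ and $d$ (using crucially that $\size(f)$ already accounts for $\log D$ and for the bit-sizes of the coefficients in $\KK$, and that $k\le\size(f)$), so that composing the polynomial-time reductions with the polynomial-time (resp. $\O(n)$-exponential-time) subroutines does not blow up the exponent beyond $\O(n)$ in the deterministic case and $\O(1)$ in the randomized case. A minor additional point is to confirm that the cited univariate and low-degree algorithms accept inputs over an arbitrary number field presented by a minimal polynomial $\varphi$, and that the cost of arithmetic in $\KK=\QQ[\xi]/\langle\varphi\rangle$ is absorbed into $\size(f)^{\O(1)}$; this is standard.
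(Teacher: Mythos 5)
Your proposal matches the paper's own derivation: the corollary is obtained by instantiating Theorem~\ref{thm:multivar} with Lenstra's univariate lacunary factorization algorithm~\cite{Len99} for step~(i) and Kaltofen's deterministic $(\size(g)+\deg g)^{\O(n)}$ algorithm~\cite{Kal85} (resp.\ the randomized $(\size(g)+\deg g)^{\O(1)}$ algorithms of~\cite{Kal89,KaTra90}) for step~(ii), exactly as the paper does in the paragraph preceding the corollary. The bookkeeping you describe — using $k\le\size(f)$ and $\log D\le\size(f)$ to absorb the reduction overhead into $(\size(f)+d)^{\O(1)}$ — is the intended (and essentially only) verification required.
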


Note that Theorems~\ref{thm:bivar} and \ref{thm:multivar} are valid with any field of characteristic $0$. For instance, 
as long as a polynomial-time algorithm is known for multivariate low-degree factorization, we obtain a polynomial-time algorithm to compute the inhomogeneous low-degree factors of multivariate lacunary polynomials. These fields include the algebraic closure $\closure\QQ$ of $\QQ$ (absolute factorization~\cite{CheGa05}), the fields of real or complex numbers (approximate factorization~\cite{KaMayYaZhi08}), or the fields of $p$-adic numbers~\cite{Chi94}. Note that for $\closure\QQ$ and $\CC$, one cannot expect to have a polynomial-time algorithm computing all low-degree factors of multivariate lacunary polynomials since as we shall see, the computation of weighted homogeneous factors is equivalent to univariate lacunary factorization. The number of irreducible linear factors of a univariate polynomial over an algebraically closed field equals its degree, thus there are too many weighted homogeneous factors to compute them in polynomial time. The case of polynomials with real (approximate) coefficients is open to the best of my knowledge. In this case, Descartes' rule of signs implies that the number of real roots, or linear factors, is bounded by $2k-1$ where $k$ is the number of terms. Therefore, it is an intriguing question whether these roots can be computed in polynomial time.

We conjecture that the reductions presented in the current paper are valid in large positive characteristic, as in~\cite{ChaGreKoiPoStr13,ChaGreKoiPoStr14}, 
using Hahn series rather than Puiseux series. 
Another intriguing question is the validity of the approach in fields of small positive characteristic.

\paragraph{Organization}
In \refsec{NewtonPuiseux}, we collect some known facts about Newton polygons and Puiseux series. \refsec{quasihom} is devoted to the computation of the weighted homogeneous factors and \refsec{nonhom} to the computation of inhomogeneous factors, both for bivariate polynomials. In \refsec{multivariate} we give a proof sketch for the case of multivariate polynomials.

\paragraph{Acknowledgments}
I am grateful to P.~Koiran, N.~Portier and Y.~Strozecki for the numerous discussions we had on this work. I also wish to thank A.~Bostan, P.~Lairez J.~Le Borgne, B.~Salvy and T.~Vaccon for their help with Puiseux series, and the anonymous reviewers for their remarks which improved the presentation of this paper.


\section{Newton polygons and Puiseux series} \label{sec:NewtonPuiseux} 

We recall a few facts about Newton polygons and Puiseux series. For more on this topic, we refer the reader to~\cite{Stu98,Abhyankar}.

Let $f=\sum_j c_j X^{\alpha_j} Y^{\beta_j}$. Its \emph{support} is the set $\Supp(f)=\{(\beta_j,\alpha_j):c_j\neq0\}$. The \emph{Newton polygon} of $f$, denoted by $\Newt(f)$, is the convex hull of its support. 
Note that the coordinates are swaped in these two definitions compared to usual conventions.
For two convex polygons $A$ and $B$, their Minkowski sum is the set $A+B=\{a+b:a\in A,b\in B\}$.  

\begin{theorem}[Ostrowski]
Let $f$, $g$, $h\in\KK[X,Y]$ such that $f=gh$. Then $\Newt(f)=\Newt(g)+\Newt(h)$. 
\end{theorem}

We note that Ostrowski's Theorem was already used to compute the factorization of a polynomial using a decomposition of its Newton polygon~\cite{AbSaGaoLau04}. Yet computing such a decomposition is $\NP$-hard~\cite{GaoLau01}. This implies that this method has an inherent polynomial dependence on the degree of the polynomial to factor unless $\P=\NP$.

By contrast, we aim to obtain a logarithmic dependence on the degree. To this end we shall use the theorem to 
determine only some edges in the decomposition of the Newton polygon. 
We can see the Newton polygon of $f$ as a set of edges. By Ostrowski's Theorem, each edge of a factor of $f$ has to be parallel to an edge of $\Newt(f)$. Moreover, if we consider a degree-$d$ factor\footnote{From now on, the expression ``\emph{degree-$d$ factors}'' denotes the \emph{irreducible factors of degree at most $d$} of a polynomial.}  $g$ of $f$, its Newton polygon is inside a square whose sides have length $d$. For an edge of endpoints $(i,j)$ and $(i',j')$, its \emph{slope} is defined as $(j'-j)/(i'-i)$. Thus the slopes of the edges of $\Newt(g)$ have the form $p/q$, $p\in\ZZ$, $q\in\NN$, with $|p|,q\le d$. By convention, we say that a vertical edge has slope $-1/0$. In particular, only edges of $\Newt(f)$ with a slope $p/q$ with $|p|,q\le d$ can be edges of the Newton polygon of a factor of $f$.

Let $g\in\KK[X,Y]$, viewed as a polynomial in $Y$ with coefficients in $\KK[X]$. We are interested in the roots of $g$ in an algebraic closure of $\KK(X)$. This algebraic closure can be described using the \emph{field of Puiseux series over the algebraic closure $\closure\KK$ of $\KK$}, denoted by $\puiseux X$. Its elements are formal sums $\phi=\sum_{t\ge t_0} f_t X^{t/d}$ where $f_t\in\closure\KK$, $f_{t_0}\neq0$, $t_0\in\ZZ$ and $d\in\NN$. All we need for our purpose is that $\puiseux X$ \emph{contains} an algebraic closure of $\KK(X)$. In other words, any root of $g\in\KK[X][Y]$ can be described by a Puiseux series.

We define the valuation of a polynomial $f\in\KK[X]$ by $\val(f)=\max\{v:X^v\text{ divides }f\}$. This valuation is easily extended to the field of Puiseux series: If $\phi=\sum_{t\ge t_0} f_t X^{t/d}$ with $f_{t_0/d}\neq0$, then $\val(\phi)=t_0/d$. For bivariate polynomials in $\KK[X,Y]$, we define similarly the valuations with respect to $X$ and with respect to $Y$, and denote them by $\val_X$ and $\val_Y$ respectively.

Since $\puiseux X$ contains an algebraic closure of $\KK(X)$, a bivariate polynomial $g\in\KK[X][Y]$ of degree $d$ in $Y$ has exactly $d$ roots (counted with multiplicity) in $\puiseux X$. That is, there exist $\phi_1,\dotsc,\phi_d\in\puiseux X$ and $g_0\in\KK[X]$ such that
\[g(X,Y)=g_0(X) \prod_{i=1}^d (Y-\phi_i(X))\text.\]
The set $\{\val(\phi_i):1\le i\le d\}$ can be described in terms of the Newton polygon of $g$.

\begin{theorem}[Newton-Puiseux] 
Let $g\in\KK[X][Y]$. It 
has a root of valuation $v$ in $\puiseux X$ if and only if there is an edge of slope $-v$ in the lower hull of $\Newt(g)$.
\end{theorem}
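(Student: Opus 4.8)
\emph{Necessity.} Write $g=\sum c_{\alpha\beta}X^{\alpha}Y^{\beta}$ and let $\phi=\sum_{t\ge t_0}f_tX^{t/e}\in\puiseux X$ be a root of $g$ with $\val(\phi)=t_0/e=v$; since $g\neq0$ we have $\phi\neq0$, so $v$ is finite and $f_{t_0}\neq0$ (the case $\phi=0$ is trivial and concerns no finite $v$). Substituting $Y=\phi$, each term $c_{\alpha\beta}X^{\alpha}\phi^{\beta}$ has valuation exactly $\alpha+v\beta$, and its coefficient of $X^{\alpha+v\beta}$ is $c_{\alpha\beta}f_{t_0}^{\beta}\neq0$. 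Let $\mu=\min\{\alpha+v\beta:(\beta,\alpha)\in\Supp(g)\}$. Because $g(X,\phi)=0$, the terms of lowest valuation must cancel, i.e.
\[\sum_{\substack{(\beta,\alpha)\in\Supp(g)\\ \alpha+v\beta=\mu}}c_{\alpha\beta}f_{t_0}^{\beta}=0\text.\]
Since every $c_{\alpha\beta}$ is nonzero and $f_{t_0}\neq0$, this sum cannot reduce to a single term, so the line $\ell:\alpha+v\beta=\mu$ meets $\Supp(g)$ in at least two points; and $\mu$ being the minimum, all of $\Newt(g)$ lies in the half-plane $\alpha+v\beta\ge\mu$. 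Hence $\ell\cap\Newt(g)$ is a $1$-dimensional face, i.e.\ an edge, whose outward normal $(-v,-1)$ has negative second coordinate, so it lies on the lower hull, and its slope is $-v$.

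\emph{Sufficiency.} I would reuse the factorisation $g(X,Y)=g_0(X)\prod_{i=1}^{d}(Y-\phi_i)$ with $\phi_i\in\puiseux X$ recalled above. Equip $\puiseux X$ with the valuation $\val$: its value group is $\QQ$ and its residue field is $\closure\KK$, a domain, so the associated graded ring is a domain and the lower Newton polygon (the lower convex hull of $\beta\mapsto\val(\text{coefficient of }Y^{\beta})$) of a product of polynomials in $\puiseux X[Y]$ is the Minkowski sum of those of the factors — Ostrowski's theorem over the valued field $\puiseux X$. For $\psi\in\puiseux X$ the lower Newton polygon of $Y-\psi$ is the segment from $(0,\val\psi)$ to $(1,0)$, of slope $-\val\psi$, while that of $g_0$ is a single point. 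Since the coefficients of $g$ lie in $\KK[X]\subset\puiseux X$ and $\val$ restricts there to the ordinary valuation, the lower Newton polygon of $g$ with respect to $\val$ coincides with the lower hull of $\Newt(g)$. It is therefore a translate of the Minkowski sum of the $d$ segments above, whose slopes (merging parallel ones) are exactly the values $-\val(\phi_i)$, $1\le i\le d$; so an edge of slope $-v$ in the lower hull of $\Newt(g)$ forces $\val(\phi_i)=v$ for some $i$, i.e.\ $g$ has a root of valuation $v$ in $\puiseux X$.

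\emph{Discussion.} The necessity half is a direct valuation computation, the only subtlety being the combinatorial step that cancellation of the minimal-valuation terms forces an edge (not a vertex). I expect the main obstacle to be in the sufficiency half, namely justifying the additivity of Newton polygons over the valued field $\puiseux X$; this reduces to checking that the initial (leading) forms of two factors multiply to a nonzero initial form, which holds because $\closure\KK$ is a field and hence the graded ring $\bigoplus_{\gamma\in\QQ}\mathfrak m^{\gamma}/\mathfrak m^{>\gamma}$ has no zero divisors. An alternative, self-contained route avoiding the factorisation is the classical Newton--Puiseux construction: the chosen lower edge of slope $-v=-p/q$ with $\gcd(p,q)=1$ yields an initial form $c\,X^{a}Y^{b}P(Y^{q}/X^{p})$ with $P\in\closure\KK[T]$ of positive degree and $P(0)\neq0$; a nonzero root $\gamma$ of $P$ and a choice of $f_{t_0}$ with $f_{t_0}^{q}=\gamma$ fix the leading term $f_{t_0}X^{v}$, and substituting $Y\mapsto f_{t_0}X^{v}+Z$ produces a polynomial whose lower Newton polygon again has an edge allowing the step to be iterated. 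Along that route the main obstacle would instead be showing that the iteration converges to a genuine Puiseux series of bounded ramification, which is precisely the classical statement (see~\cite{Abhyankar,Stu98}).
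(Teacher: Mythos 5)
The paper does not prove the Newton--Puiseux theorem; it records it as a classical result with pointers to~\cite{Stu98,Abhyankar}, so there is no in-paper argument to compare your attempt against. Your proof is correct. The necessity half is a clean direct computation: each term $c_{\alpha\beta}X^{\alpha}\phi^{\beta}$ has valuation exactly $\alpha+v\beta$ with nonzero leading coefficient $c_{\alpha\beta}f_{t_0}^{\beta}$, so the minimum-valuation terms must cancel, forcing at least two support points on the supporting line $\alpha+v\beta=\mu$; the resulting edge is non-vertical because $v$ is finite, and your outward-normal check correctly places it on the lower hull with slope $-v$ in the paper's $(\beta,\alpha)$ convention. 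The sufficiency half via $g=g_0\prod_i(Y-\phi_i)$ and the additivity of the lower Newton polygon over the valued field $\puiseux X$ is a standard and valid route; you correctly isolate the one nontrivial ingredient (the initial form of a product is the product of initial forms, because the associated graded ring of $\puiseux X$ is a domain), and you correctly observe that the $\val$-Newton polygon of $g$ equals the lower hull of $\Newt(g)$, since $\val$ of the coefficient of $Y^{\beta}$ is $\min\{\alpha:(\beta,\alpha)\in\Supp(g)\}$. The phrase ``Minkowski sum'' should really be read as concatenation of lower edges sorted by slope, but that is only cosmetic. Your alternative sketch via the classical Newton--Puiseux iteration is also sound as an outline, with the genuine bottleneck (convergence and bounded ramification of the constructed series) correctly identified.
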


The lower hull of $\Newt(g)$ is the set of edges of $\Newt(g)$ which are below $\Newt(g)$, excluding vertical edges. We define in the same way the upper hull of $\Newt(g)$.

As a consequence of Ostrowski's Theorem and Newton-Puiseux Theorem, we get informations on the roots of the factors of a polynomial by inspecting its Newton polygon.

\begin{corollary}
Let $f,g\in\KK[X,Y]$, where $g$ is a degree-$d$ factor of $f$. Then $g$ has a root $\phi\in\puiseux X$ of valuation $v$ only if there is an edge in the lower hull of $\Newt(f)$ of slope $-v=-p/q$ where $q>0$ and $|p|,q\le d$.
\end{corollary}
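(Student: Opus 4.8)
The plan is to chain together the two theorems of this section. Since $g$ is a degree-$d$ factor of $f$, write $f=gh$ with $h\in\KK[X,Y]$; the existence of a root $\phi\in\puiseux X$ forces $\deg_Y g\ge 1$ (a nonzero element of $\KK[X]$, viewed in $\KK[X][Y]$, has no roots), so the Newton--Puiseux Theorem applies to $g$ and gives an edge of slope $-v$ in the lower hull of $\Newt(g)$.

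Next I would push this edge up to $\Newt(f)$ via Ostrowski's Theorem, which gives $\Newt(f)=\Newt(g)+\Newt(h)$. The step that needs an argument is the standard fact that the lower hull of a Minkowski sum $A+B$ is the Minkowski sum of the lower hulls of $A$ and $B$: its edges are precisely the edges of the lower hulls of $A$ and of $B$, sorted by increasing slope, with edges of a common slope concatenated. In particular, every slope occurring in the lower hull of $A$ occurs in the lower hull of $A+B$. The cleanest way to establish this is through supporting functionals: for $\lambda(i,j)=vi+j$, an edge of slope $-v$ in the lower hull of a polygon $P$ is exactly the face $\{x\in P:\lambda(x)=\min_P\lambda\}$ when that face is one-dimensional, and since $\min_{A+B}\lambda=\min_A\lambda+\min_B\lambda$ with the two minimal faces adding to the minimal face of $A+B$, a one-dimensional minimal face of $A$ yields one of $A+B$ with the same slope. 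Taking $A=\Newt(g)$ gives an edge of slope $-v$ in the lower hull of $\Newt(f)$.

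It remains to bound $v$. Because $g$ has total degree at most $d$, its support lies in $\{(i,j):i,j\ge 0,\ i+j\le d\}$, so the two endpoints of the slope-$(-v)$ edge of $\Newt(g)$ have coordinates in $\{0,\dots,d\}$. Writing the slope in lowest terms as $-p/q$ with $q\ge 1$ (lower-hull edges are non-vertical, so this is legitimate), the horizontal extent of the edge is a positive multiple of $q$ that is at most $d$, whence $q\le d$, and likewise its vertical extent is a multiple of $|p|$ that is at most $d$, whence $|p|\le d$; this is the claimed conclusion. I do not anticipate a real obstacle here: the statement is essentially bookkeeping combining Ostrowski and Newton--Puiseux with the observation that a degree-$d$ polynomial sits inside a $d\times d$ box, and the only point deserving care --- the Minkowski-sum claim --- itself reduces to the elementary behaviour of linear functionals under Minkowski sums.
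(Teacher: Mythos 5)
Your proof is correct and takes the same route the paper intends: apply Newton--Puiseux to $\Newt(g)$, transfer the resulting edge to $\Newt(f)$ via Ostrowski's Minkowski-sum decomposition, and bound the slope in lowest terms by the degree of $g$. The one step the paper leaves implicit --- that a lower-hull edge of a summand survives as a lower-hull edge (same slope) of the Minkowski sum --- is exactly the supporting-functional fact you spell out with $\lambda(i,j)=vi+j$, so your write-up is a faithful fleshing-out of the paper's argument rather than a different one.
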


Let us suppose that we are given a bivariate lacunary polynomial $f=\sum_{j=1}^k c_jX^{\alpha_j} Y^{\beta_j}$ as the list of its nonzero terms, represented by triples $(c_j,\alpha_j,\beta_j)$. The common first step of our algorithms is the computation of the Newton polygon of $f$. This can be done in time polynomial in $k$ and $\log(\deg(f))$ using for instance Graham's scan~\cite{ComputationalGeometry}. The output is the ordered list of vertices of the Newton polygon. 


\section{Weighted homogeneous factors} \label{sec:quasihom} 

The aim of this section is to reduce the computation of the degree-$d$ weighted homogeneous factors of a bivariate lacunary polynomial to univariate lacunary factorization. 
We first collect some useful facts on weighted homogeneous polynomials. A polynomial $g=\sum_j b_j X^{\gamma_j}Y^{\delta_j}$ is said \emph{$(p,q)$-homogeneous of order $\omega$} if there exist two relatively prime integers $p$ and $q$, $q\ge 0$, and $\omega\ge0$ such that $p\gamma_j+q\delta_j=\omega$ for all $j$. In terms of the Newton polygon, this means that $\Newt(g)$ is contained in a line of slope $-q/p$. 
Note that there are two degenerate cases: $g$ is $(1,0)$-homogeneous if $\gamma_j$ is constant, thus if it can be written $X^\gamma h$ where $h\in\KK[Y]$, and similarly it is $(0,1)$-homogeneous if it can be written $Y^\beta h$ with $h\in\KK[X]$. 
Any polynomial $g$ can be written $g=g_1+\dotsb+g_s$ where the $g_t$'s are the $(p,q)$-homogeneous components of $g$, of pairwise distinct orders. 

The product of two $(p,q)$-homogeneous polynomials of order $\omega_1$ and $\omega_2$ respectively is $(p,q)$-homogeneous of order $\omega_1+\omega_2$. Conversely, any factor of a $(p,q)$-homogeneous polynomial is itself $(p,q)$-homogeneous. 

We shall also need a notion of $(p,q)$-homogenization of a univariate polynomial: If $p,q>0$, the $(p,q)$-homogenization of $h\in\KK[X]$ is $h_{p,q}=Y^{p\deg(h)} h(X^q/Y^p)$. A monomial $X^\delta$ of $h$ becomes $X^{q\delta}Y^{p(\deg(h)-\delta)}$. For all $\delta$, $p(\deg(h)-\delta)\ge 0$ and $p\cdot q\delta+q\cdot p(\deg(h)-\delta)=pq\deg(h)$ is independent of $\delta$. Thus $h_{p,q}$ is a $(p,q)$-homogeneous polynomial. If $p<0$ and $q>0$, the $(p,q)$-homogenization is defined by $h_{p,q}(X,Y)=h(X^q Y^{-p})$. Since $p<0$, $h_{p,q}$ is a polynomial and one easily checks that it is $(p,q)$-homogeneous of order $0$. The $(0,1)$-homogenization of $h\in\KK[X]$ is $h$ itself. The $(1,0)$-homogenization is only defined for $h\in\KK[Y]$ and is the identity too. It is clear that for all $p$ and $q$, the $(p,q)$-homogenization of a product $h_1h_2$ equals the product of the $(p,q)$-homogenizations of $h_1$ and $h_2$.

We define the \emph{normalization} of a bivariate polynomial $g$, denoted by $g^\circ$, as $g^\circ(X,Y)=X^{-\val_X(g)}Y^{-\val_Y(g)} g(X,Y)$, so that $\val_X(g^\circ)=\val_Y(g^\circ)=0$. Note that the $(p,q)$-homogenization of $h\in\KK[X]$ is a normalized polynomial, and every irreducible polynomial is in particular normalized. If $g=\sum_j b_jX^{\gamma_j}Y^{\delta_j}$ is normalized and $(p,q)$-homogeneous ($p,q\neq0$) of order $\omega$, then $q\divides\gamma_j$ and $p\divides\delta_j$ for all $j$. Indeed, since $g$ is normalized, there exists $j_1$ such that $\gamma_{j_1}=0$. Hence $q\delta_{j_1}=\omega$ and $q\divides\omega$. Let us thus write $\omega=q\omega'$. Now for all $j$, $p\gamma_j=q\omega'-q\delta_j$, whence $\gamma_j$ is divisible by $q$ since $p$ and $q$ are relatively prime. In the same way, $p$ divides $\delta_j$.

We first show that for all $p$ and $q$, one can reduce the computation of the degree-$d$ $(p,q)$-homogeneous factors of $f$ to the computation of the degree-$(d/q)$ factors of some univariate lacunary polynomials. 

\begin{theorem}\label{thm:quasihom}
Let $f=\sum_{j=1}^k c_j X^{\alpha_j}Y^{\beta_j}\in\KK[X,Y]$  
and let $f_1$, \dots, $f_s$ be its $(p,q)$-homogeneous components for some $p$ and $q$, $q\neq0$. 
Then $\mult{g}{f}=\min_{1\le t\le s}(\mult{g}{f_t})$ for any $(p,q)$-homogeneous irreducible polynomial $g$. 

Moreover, if $f_t^\circ$ denotes the normalization of $f_t$ for all $t$, 
\[\mult{g}{f_t}=\mult{g(X^{1/q},1)}{f_t^\circ(X^{1/q},1)}\text.\]
\end{theorem}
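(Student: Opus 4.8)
The plan is to establish the two identities in turn, using only the facts recalled above: the $(p,q)$-order is additive under multiplication, every factor of a $(p,q)$-homogeneous polynomial is $(p,q)$-homogeneous, the $(p,q)$-homogenization of a product is the product of the homogenizations, and a normalized $(p,q)$-homogeneous polynomial $\sum_j b_j X^{\gamma_j}Y^{\delta_j}$ satisfies $q\divides\gamma_j$ (and $p\divides\delta_j$ when $p\neq0$).

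\emph{The first identity.} Set $m=\mult{g}{f}$ and write $f=g^mh$ with $h\in\KK[X,Y]$; let $h=h_1+\dots+h_r$ be the decomposition of $h$ into $(p,q)$-homogeneous components, of pairwise distinct orders $\nu_1,\dots,\nu_r$, and let $\omega$ be the order of $g$. Then each $g^mh_i$ is nonzero and $(p,q)$-homogeneous of order $m\omega+\nu_i$, and these orders are pairwise distinct, so $f=\sum_i g^mh_i$ is the (unique) decomposition of $f$ into $(p,q)$-homogeneous components of distinct orders. Hence $(g^mh_i)_i$ is, up to reindexing, exactly $(f_1,\dots,f_s)$, so $g^m$ divides every $f_t$ and $\min_t\mult{g}{f_t}\ge m$. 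Conversely, putting $m'=\min_t\mult{g}{f_t}$, the polynomial $g^{m'}$ divides each $f_t$, hence divides $f=\sum_t f_t$, so $\mult{g}{f}\ge m'$. The two inequalities give $\mult{g}{f}=\min_t\mult{g}{f_t}$.

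\emph{The second identity.} We may assume $g$ is not a constant multiple of $X$ or $Y$ (otherwise $g$ is a monomial and both multiplicities are read directly off the valuations), so that $g$ is normalized. From $f_t=X^{\val_X(f_t)}Y^{\val_Y(f_t)}f_t^\circ$ and the irreducibility of $g$ we get $\mult{g}{f_t}=\mult{g}{f_t^\circ}$, so it suffices to prove $\mult{g}{f_t^\circ}=\mult{G}{F}$ with $G=g(X^{1/q},1)$ and $F=f_t^\circ(X^{1/q},1)$. The crux is a dictionary: if $P=\sum_j b_jX^{\gamma_j}Y^{\delta_j}$ is normalized and $(p,q)$-homogeneous, then $q\divides\gamma_j$; writing $\gamma_j=qa_j$ and using $\val_X(P)=\val_Y(P)=0$ one checks — distinguishing $p>0$, $p<0$ and $p=0$ — that $P$ is precisely the $(p,q)$-homogenization $R_{p,q}$ of $R:=P(X^{1/q},1)=\sum_j b_jX^{a_j}\in\KK[X]$; moreover $R\mapsto R_{p,q}$ and $P\mapsto P(X^{1/q},1)$ are mutually inverse, and homogenization is injective and multiplicative. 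Applying this to $g=G_{p,q}$ and $f_t^\circ=F_{p,q}$: if $F=G^mS$ then $f_t^\circ=F_{p,q}=(G^mS)_{p,q}=g^mS_{p,q}$, so $g^m\divides f_t^\circ$; conversely, if $f_t^\circ=g^mh$ then $h$ divides the $(p,q)$-homogeneous polynomial $f_t^\circ$, hence is $(p,q)$-homogeneous, and since $\val_X(g^m)+\val_X(h)=\val_X(f_t^\circ)=0$ with both terms $\ge0$ — and likewise for $\val_Y$ — $h$ is normalized, so $h=H_{p,q}$ with $H=h(X^{1/q},1)$; then $F_{p,q}=f_t^\circ=g^mh=(G^mH)_{p,q}$, and injectivity of homogenization forces $F=G^mH$, i.e. $G^m\divides F$. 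Hence $\mult{g}{f_t^\circ}=\mult{G}{F}$, as claimed.

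\emph{Main obstacle.} Once the dictionary is in place the rest is routine, so the delicate step is proving $P=\bigl(P(X^{1/q},1)\bigr)_{p,q}$ for a normalized $(p,q)$-homogeneous $P$: one must treat the sign of $p$ (and the degenerate value $p=0$, where necessarily $q=1$) separately, and use carefully that normalization forces $\min_j\gamma_j=0$ and $\min_j\delta_j=0$, so that the power $Y^{p\deg R}$ introduced by homogenization when $p>0$ — or the absence of any $Y$-power when $p\le0$ — matches the exponents appearing in $P$. The other ingredients (stability of $(p,q)$-homogeneity and of normalization under taking factors, and multiplicativity of homogenization) are all supplied by the preceding discussion.
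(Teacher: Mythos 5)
Your proof is correct and takes essentially the same approach as the paper: decompose $h$ into $(p,q)$-homogeneous components to identify the $f_t$ with the $g^\mu h_t$ for the first identity, and use the dictionary between normalized $(p,q)$-homogeneous polynomials and their dehomogenizations $P\mapsto P(X^{1/q},1)$ together with multiplicativity of $(p,q)$-homogenization for the second. You are slightly more careful than the paper at two points — you exclude the monomial factors $X$, $Y$ (which are irreducible but not normalized, so the paper's blanket remark that every irreducible polynomial is normalized does not literally apply to them), and you state the two inequalities of the first identity with the roles in the correct order, where the paper's phrasing inadvertently swaps them — but neither changes the substance of the argument.
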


\begin{proof} 
If $f=g^\mu h$, one can write $h=h_1+\dotsb+h_{s'}$ as a sum of $(p,q)$-homogeneous components. Then each $g^\mu h_t$ is $(p,q)$-homogeneous, and they have pairwise distinct orders. Hence $s'=s$ and, up to reordering, $g^\mu h_t=f_t$ for all $t$. Therefore, $\mult{g}{f}\ge\min_t(\mult{g}{f_t})$. The converse inequality is obvious.

For the second part, let us assume that $f$ itself is a $(p,q)$-homogeneous and normalized polynomial. 
As mentioned earlier, $q\divides\alpha_j$. Therefore $f_q(X)=f(X^{1/q},1)$ and $g_q(X)=g(X^{1/q},1)$ are polynomials. 
Suppose that $f=g^\mu h$. Then $h$ is also $(p,q)$-homogeneous. Since $f$ is normalized, $h$ is normalized and the exponents of $X$ in $h$ are multiples of $q$. In other words, $f_q(X)=g_q(X)^\mu h(X^{1/q},1)$ is an equality of polynomials. Conversely, suppose that there exist $h_q$ such that $f_q(X)=g_q^\mu(X)h_q(X)$. To prove that $g^\mu$ divides $f$, it suffices to $(p,q)$-homogenize this equality. One can easily check that the $(p,q)$-homogenization of $f_q$ and $g_q$ are $f$ and $g$ respectively. Thus if we denote by $h$ the $(p,q)$-homogenization of $h_q$, $f=g^\mu h$. 
\end{proof} 

The case $q=0$ is similar. The only difference is for the second part of the theorem: The conclusion is $\mult{g}{f_t}=\mult{g}{f_t^\circ(1,Y)}$ since $g\in\KK[Y]$ and $g(1,Y)=g(X,Y)$.

We can now give an algorithm to compute the degree-$d$ weighted homogeneous factors of a bivariate lacunary polynomial, provided we dispose of an algorithm for the computation of the degree-$d$ factors of univariate polynomials. We assume that such an algorithm is given as an oracle.

\begin{algo}\label{algo:quasihom} ~\\
\textbf{Input:} A polynomial $f\in\KK[X,Y]$ given in lacunary representation and an integer~$d$.\\
\textbf{Output:} The list $L$ of the degree-$d$ weighted homogeneous factors of $f$, with their multiplicities.\\
\textbf{Oracle:} Given $f_0\in\KK[X]$ in lacunary representation and an integer $d$, computes the degree-$d$ factors of $f_0$.
\begin{enumerate}
\item Compute $\Newt(f)$ and initialize $L\gets\emptyset$.
\item For each pair of parallel edges in $\Newt(f)$, of slopes $-q/p$ with $|p|,q\le d$:\footnote{Vertical edges are said to have slope $-1/0$ by convention.}
    \begin{enumerate}
    \item Compute the $(p,q)$-homogeneous components $f_1$, \dots, $f_s$ 
        of $f$, and their normalizations $f_1^\circ$, \dots, $f_s^\circ$;
    \item For $t=1$ to $s$:
        \begin{enumerate} 
        \item Using the oracle, compute the degree-$(d/q)$ factors $(h_1,\mu_1)$,  \dots, $(h_{s_t},\mu_{s_t})$ of $f_t^\circ(X^{1/q},1)$, resp. $f_t^\circ(1,Y)$ if $q=0$;
        \item Let $L_t$ be the list of pairs $(g_u,\mu_u)$, $1\le u\le s_t$, such that $g_u$ is the $(p,q)$-homogenization of $h_u$ and $\deg(g_u)\le d$.
        \end{enumerate}
    \item $L\gets L\cup \bigcap_{t=1}^s L_t$. 
    \end{enumerate}
\item Return $L$.
\end{enumerate}
\end{algo}

In the algorithm, union and intersection are multisets operations: if $(g,\mu_1)\in L_1$ and $(g,\mu_2)\in L_2$, then $L_1\cup L_2$ contains $(g,\max(\mu_1,\mu_2))$ and $L_1\cap L_2$ contains $(g,\min(\mu_1,\mu_2))$.

\begin{proposition}
Algorithm~\ref{algo:quasihom} is correct. If the input polynomial has degree $D$ and $k$ terms, the algorithm uses at most $(k\log D+d)^{\O(1)}$ bit operations, plus $d^{\O(1)}$ per factor in post-processing. The sum of the sizes of all the univariate lacunary polynomials given to the oracle is at most $\frac{k}{2}\size(f)$.
\end{proposition}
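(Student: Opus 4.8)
The plan is to prove correctness and the complexity bound separately. For correctness, I would first argue that every degree-$d$ weighted homogeneous irreducible factor $g$ of $f$ is detected. Such a $g$ is $(p,q)$-homogeneous for some coprime $p,q$ with $q\ge 0$; since $\Newt(g)$ lies on a line of slope $-q/p$ and $g$ divides $f$, Ostrowski's Theorem forces this line to be parallel to an edge of $\Newt(f)$, and the degree bound gives $|p|,q\le d$. So the pair $(p,q)$ is examined in step~2. Fixing such a pair, Theorem~\ref{thm:quasihom} gives $\mult{g}{f}=\min_t\mult{g}{f_t}$, and each $\mult{g}{f_t}$ equals $\mult{g(X^{1/q},1)}{f_t^\circ(X^{1/q},1)}$ (or the $q=0$ analogue). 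Now I would check the bijection used in step~2(b): writing $g$ for the $(p,q)$-homogenization of its "dehomogenization" $h=g(X^{1/q},1)$, the paragraphs on $(p,q)$-homogenization show homogenization is multiplicative and inverts dehomogenization on normalized $(p,q)$-homogeneous polynomials, so $g$ is irreducible of degree $\le d$ iff $h$ is an irreducible factor of $f_t^\circ(X^{1/q},1)$ returned by the oracle with $\deg(g_u)\le d$; hence $(g,\mu_t)\in L_t$ with $\mu_t=\mult{g}{f_t}$ for every $t$, and the multiset intersection $\bigcap_t L_t$ produces $(g,\min_t\mu_t)=(g,\mult{g}{f})$. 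Conversely, any $(g,\mu)$ added to $L$ arises this way, so by Theorem~\ref{thm:quasihom} it is a genuine factor with correct multiplicity; distinct edge-pairs yield $(p,q)$-homogeneous factors for distinct directions, so no factor is counted twice. This settles correctness.

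For the complexity, the Newton polygon is computed in $(k\log D)^{\O(1)}$ bit operations and has at most $k$ vertices, hence $O(k^2)$ pairs of parallel edges, but I would sharpen the accounting for the oracle input sizes: for a fixed $(p,q)$, the $(p,q)$-homogeneous components $f_1,\dots,f_s$ partition the $k$ terms of $f$, so $\sum_t (\text{number of terms of } f_t^\circ)=k$, and the exponents of the dehomogenized polynomials $f_t^\circ(X^{1/q},1)$ are bounded by $\deg f$ divided by $q$, so each has bit-size $O(\log D)$ per exponent plus the coefficient sizes, which are unchanged. The key extra observation — this is the point of the factor $\tfrac12$ — is that each edge direction contributing a nontrivial decomposition ($s\ge 2$) consumes at least two of the $k$ terms as "extreme" terms not shared with the decomposition along any other nonparallel direction; more simply, one sums $\sum_{\text{directions}} \sum_t \size(f_t^\circ(X^{1/q},1))$ and observes that a term of $f$ that is interior to $\Newt(f)$ contributes to at most ... — here I would instead use the cleaner bound: only directions of edges of $\Newt(f)$ matter, the edges partition the boundary, and along a single direction the components' supports are disjoint slices, so the total is at most $\tfrac{k}{2}\size(f)$ after checking that each term is "reused" across at most $k/2$ directions because it lies on at most two edges of the (at most $k$-gon) $\Newt(f)$ — this combinatorial bookkeeping is the step I expect to be the main obstacle, and I would prove it by charging each homogenized term to an incidence (term, edge of $\Newt(f)$) and counting incidences.

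Finally, inside the loop everything apart from the oracle calls is polynomial in $k$, $\log D$, and $d$: computing the $(p,q)$-homogeneous components and normalizations is a linear scan over the $k$ terms with integer arithmetic on exponents of bit-size $O(\log D)$; forming $L_t$ requires re-homogenizing each returned factor $h_u$ and testing $\deg(g_u)\le d$, which costs $d^{\O(1)}$ per returned factor (this is the claimed $d^{\O(1)}$ per-factor post-processing); and the multiset unions and intersections are performed on lists whose total length is bounded by the number of oracle-returned factors. Summing over the $O(k^2)$ edge-pairs keeps the non-oracle cost at $(k\log D+d)^{\O(1)}$, giving the stated bound, and the sum of oracle input sizes is $\le \tfrac{k}{2}\size(f)$ by the incidence count above.
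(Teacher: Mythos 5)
Your correctness argument is essentially the paper's: Ostrowski forces the direction $(p,q)$ to match a pair of parallel edges of $\Newt(f)$ with $|p|,q\le d$, then Theorem~\ref{thm:quasihom} reduces to the oracle, and the multiplicativity of $(p,q)$-homogenization justifies the dehomogenize--factor--rehomogenize loop. That part is sound, and your added remark on why the returned factors must be filtered to degree $\le d$ is correct.

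The gap is in your accounting for the bound $\frac{k}{2}\size(f)$, where you overcomplicate and then go astray. You start by counting ``$O(k^2)$ pairs of parallel edges,'' which is already off: the algorithm only iterates over pairs of \emph{parallel} edges, and in a convex polygon any direction is attained by at most two edges (one on the lower hull, one on the upper hull, or the two vertical edges). Since $\Newt(f)$ has at most $k$ edges, this gives at most $k/2$ pairs, not $O(k^2)$. Your attempted repair --- charging terms to edges of $\Newt(f)$ and arguing ``each term lies on at most two edges'' --- does not work and is not what drives the bound. A term contributes to the oracle input for \emph{every} direction the loop examines, regardless of whether that term sits on an edge of $\Newt(f)$ (interior terms still appear in exactly one $(p,q)$-homogeneous component $f_t$ for each $(p,q)$). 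The correct and much simpler argument, which is the paper's, is: for each fixed $(p,q)$ the components $f_1,\dots,f_s$ partition the $k$ terms of $f$, so $\sum_t\size(f_t)=\size(f)$, and the outer loop runs at most $k/2$ times; multiply. Your incidence-counting scheme would not deliver this, and the paragraph where you flag this as ``the main obstacle'' is indeed where your proof breaks.
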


\begin{proof} 
A $(p,q)$-homogeneous polynomial has a Newton polygon contained in a line of slope $-q/p$. By Ostrowski's Theorem, $f$ can have a $(p,q)$-homogeneous degree-$d$ factor only if its Newton polygon has two parallel edges of slopes $-q/p$ with $|p|,q\le d$. (There is a special case: $(0,1)$-homogeneous factors are factors depending only on the variable $X$ and correspond to vertical edges.) Therefore, the set of pairs $(p,q)$ is correctly computed. 

Now for each such pair  $(p,q)$, 
the algorithm computes the $(p,q)$-homogeneous factors of $f$ of degree $d$. The correctness of this part directly follows from Theorem~\ref{thm:quasihom}. 
It is enough for the oracle to compute degree-$(d/q)$ factors since for a degree-$d$ factor $g$ of $f_t$, $g(X^{1/q},1)$ has degree $d/q$. Note that the factors we compute may still be of degree larger than $d$, hence we discard the higher-degree factors. 

All the steps are easily seen to be polynomial-time computable since they consist in simple manipulations of lists of integer exponents, including the computation of the Newton polygon as noticed at the end of \refsec{NewtonPuiseux}. For each factor, the post-proceessing step is a computation on a list of exponents of size at most $d^{O(1)}$.

There are at most $k/2$ pairs of parallel edges in $\Newt(f)$. For each such pair, since $f_1$, \dots, $f_s$ have a lacunary representation, $\sum_t\size(f_t)=\size(f)$, whence the result.
\end{proof} 


\section{Inhomogeneous factors} \label{sec:nonhom} 

In this section, we study the factors of a bivariate lacunary polynomial whose Newton polygon is not contained in a line, that is which are not weighted homogeneous. For a bivariate lacunary polynomial $f$ and an irreducible polynomial $g$ having a root $\phi\in\puiseux X$, we first give a bound on the valuation of $f(X,\phi(X))$ in the first section. In the second section, we use this bound to give a Gap Theorem for inhomogeneous degree-$d$ factors of bivariate lacunary polynomials. We deduce an algorithm to reduce the computation of these factors to some bivariate low-degree factorizations.

\subsection{Bounds on the valuation} 

The aim of this section is to prove the following theorem. 

\begin{theorem} \label{thm:val}
Let $g\in\KK[X][Y]$ be an irreducible polynomial of total degree $d$ such that $\frac{\partial g}{\partial Y}\neq0$, 
and $\phi\in\puiseux X$ be a root of $g$ of valuation $v$.

Let $f=\sum_{j=1}^\ell c_j X^{\alpha_j}Y^{\beta_j}$ be a polynomial with exactly $\ell$ terms,
and suppose that the family $(X^{\alpha_j}\phi^{\beta_j})_{1\le j\le\ell}$ is linearly independent.

Then
\[\val\bigl(f(X,\phi(X))\bigr)\le \min_{1\le j\le\ell}(\alpha_j+v\beta_j) + 
    (2d(4d+1)-v)
    \binom{\ell}{2}\text.\]
\end{theorem}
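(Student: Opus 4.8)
The plan is to mimic the linear case (the bound $\alpha_1+\binom{k}{2}$ for $f(X,uX+v)$) but now with the root $\phi$ of an arbitrary irreducible $g$. The key tool will be a Wronskian-type argument: since the family $(X^{\alpha_j}\phi^{\beta_j})_{1\le j\le\ell}$ is linearly independent over $\closure\KK$ (equivalently over the Puiseux field, which has characteristic zero), its Wronskian $\wr(X^{\alpha_1}\phi^{\beta_1},\dots,X^{\alpha_\ell}\phi^{\beta_\ell})$ with respect to $X$ is nonzero. Writing $f(X,\phi)=\sum_j c_j X^{\alpha_j}\phi^{\beta_j}$, one has the classical identity relating $\val(f(X,\phi))$ to the valuations of the Wronskian of the whole family and of the Wronskian of the sub-family obtained by deleting one element: concretely, $\val(f(X,\phi)) \le \val(\wr(\text{family}\cup\{f(X,\phi)\})) - \val(\wr(\text{family}))$ is not quite right, so instead I would use the standard inequality that bounds the valuation of any nonzero $\closure\KK$-linear combination of the $e_j:=X^{\alpha_j}\phi^{\beta_j}$ by $\min_j\val(e_j) + \bigl(\text{something}\bigr)$, where the correction term comes from $\val(\wr(e_1,\dots,e_\ell)) - \sum_j\val(e_j) + \binom{\ell}{2}$ (the $\binom{\ell}{2}$ accounting for the $(\ell-1)+(\ell-2)+\dots$ differentiations). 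This is exactly the mechanism used in the multilinear-factors paper, so I would invoke or reprove that lemma in the Puiseux setting.

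So the first step is to establish the abstract Wronskian inequality: if $e_1,\dots,e_\ell$ are linearly independent Puiseux series (or elements of any differential field extension of $\closure\KK(X)$ with the usual derivation), then for any nonzero linear combination $e=\sum c_j e_j$,
\[
\val(e)\ \le\ \min_j \val(e_j)\ +\ \Bigl(\val\bigl(\wr(e_1,\dots,e_\ell)\bigr)-\sum_{j=1}^\ell\val(e_j)\Bigr)\ +\ \binom{\ell}{2}.
\]
This follows by expressing $e$ (via Cramer's rule applied to the linear system whose matrix is the Wronskian matrix with one row replaced) and tracking valuations; the $\binom{\ell}{2}$ is the maximal valuation drop caused by $0,1,\dots,\ell-1$ successive derivations. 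I would state this as a lemma and give the short determinant computation.

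The second, and I expect main, step is to bound the ``Wronskian defect'' $\Delta:=\val(\wr(e_1,\dots,e_\ell))-\sum_j\val(e_j)$ when $e_j = X^{\alpha_j}\phi^{\beta_j}$ and $\phi$ is a root of an irreducible $g$ of total degree $d$ with $\partial g/\partial Y\neq 0$. Here the derivative of $e_j$ with respect to $X$ is $e_j'=(\alpha_j X^{-1} + \beta_j \phi'\phi^{-1})e_j$, so $\wr(e_1,\dots,e_\ell) = \bigl(\prod_j e_j\bigr)\cdot\det\bigl((\alpha_i X^{-1}+\beta_i\phi'/\phi)^{\,r}\text{-type entries}\bigr)$ after the usual column/row reductions — more precisely the Wronskian equals $\prod_j e_j$ times a polynomial expression in $X^{-1}$, $\phi'/\phi$ and the higher logarithmic derivatives of $\phi$. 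The crucial sub-task is therefore to bound $\val$ of $\phi'/\phi$ and its iterated ``derivatives'' from below (they can only be as negative as $-1$ from the $X^{-1}$ factor, since $\phi'$ has valuation $v-1$ and $\phi$ has valuation $v$, giving $\val(\phi'/\phi)\ge -1$; similarly $\val$ of the $r$-th logarithmic-derivative operator applied is $\ge -r$), and — this is where $d$ enters and where the irreducibility and $\partial g/\partial Y\neq0$ hypotheses are used — to bound the negative contribution to $\val(\wr)$ coming from the fact that $\phi$ itself, though of valuation $v$, may have denominators: $\phi\in\closure\KK\langle\!\langle X\rangle\!\rangle$ has fractional exponents with denominator dividing $d$ (the ramification index divides $\deg_Y g\le d$), and differentiating brings factors like $t/d$. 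Carefully combining these gives $\Delta\le (2d(4d+1)-v-1)\binom{\ell}{2}$ or similar, which plugged into the lemma yields the claimed bound with $\min_j(\alpha_j+v\beta_j) = \min_j\val(e_j)$. The delicate bookkeeping — getting the exact constant $2d(4d+1)$ from bounds on $\val(\phi)$, $\val(\phi')$, the ramification $\le d$, and possibly a bound on the "height" of the Newton–Puiseux data of $g$ — is what I expect to be the real work; the structure of the argument (Wronskian + valuation of logarithmic derivatives) is the easy part.

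Finally, I would note the edge cases: the hypothesis $\partial g/\partial Y\neq0$ ensures $g$ is separable in $Y$ so $\phi$ is well-defined and $\phi'$ makes sense via $g(X,\phi)=0\Rightarrow \partial_X g + \phi'\partial_Y g = 0$, giving $\phi' = -\partial_X g(X,\phi)/\partial_Y g(X,\phi)$, and one reads off $\val(\phi')$ (hence $\val(\phi'/\phi)\ge -1$, or the refined estimate with $d$) directly from the Newton polygon of $g$ and the constraint that $\deg g = d$. Linear independence of the $e_j$ is exactly what is needed for the Wronskian to be nonzero (in characteristic zero), so no further genericity is required.
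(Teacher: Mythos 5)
Your first step---the abstract Wronskian inequality
\[\val\bigl(f(X,\phi)\bigr) \le \min_j\val(e_j) + \Bigl(\val\bigl(\wr(e_1,\dots,e_\ell)\bigr) - \sum_j\val(e_j)\Bigr) + \binom{\ell}{2},\qquad e_j := X^{\alpha_j}\phi^{\beta_j},\]
is exactly the mechanism of the paper's proof, which obtains it by replacing the first column of the Wronskian matrix by $\psi=f(X,\phi)$ and its derivatives (so that the new determinant equals $c_1$ times the old one) and then applying Lemma~\ref{lemma:valLowerBound}. So far so good. Your second step, however, points in the wrong direction. You propose to bound the defect $\Delta := \val(\wr) - \sum_j\val(e_j)$ by lower-bounding the valuations of $\phi'/\phi$ and its iterated logarithmic derivatives ($\ge -r$ for the $r$-th). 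Lower bounds on the entries of a determinant only give a \emph{lower} bound on the valuation of the determinant---that is already the content of Lemma~\ref{lemma:valLowerBound}---and hence a lower bound on $\Delta$. What the theorem needs is an \emph{upper} bound on $\Delta$: a guarantee that the nonzero Wronskian is not divisible by an excessively high power of $X$. Nothing in your logarithmic-derivative estimate, nor in your remark about ramification $\le d$ and the factors $t/d$ introduced by differentiation, rules out massive cancellation in the determinant.

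The paper's resolution uses two ingredients your proposal does not supply. First, Lemma~\ref{lemma:wronskian} (imported from~\cite{KoiPoTa13}) eliminates all derivatives $\phi',\phi'',\dots$ via the relation $g(X,\phi)=0$ and rewrites the Wronskian in closed form as
\[W = X^{A-\binom{\ell}{2}}\phi^{B-\binom{\ell}{2}}\,\frac{h_\ell(X,\phi)}{g_Y^{\ell(\ell-1)}(X,\phi)},\]
with $h_\ell$ a bivariate polynomial of degree $(1+2d)\binom{\ell}{2}$ in each variable. Second---and this is the key idea missing from your plan---Lemma~\ref{lemma:polySeries} shows $|\val(h(X,\phi))|\le 2d\delta$ for any polynomial $h$ of degree $\le\delta$ in each variable, by considering the resultant $\res_Z\bigl(g(X,Z),\,Y-h(X,Z)\bigr)$: it vanishes at $Y=h(X,\phi)$, has degree $\le 2d\delta$ in $X$, and Newton--Puiseux then bounds the valuation of its roots. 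Applied to $h_\ell$ it gives the needed \emph{upper} bound on $\val(h_\ell(X,\phi))$, and applied to $g_Y$ it gives a lower bound on $\val(g_Y(X,\phi))$; together these upper-bound $\val(W)$, hence $\Delta$. The constant $2d(4d+1)$ comes from feeding the degrees $(1+2d)\binom{\ell}{2}$ and $d$ into the $2d\delta$ resultant bound, not from bookkeeping of ramification.
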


The proof of this theorem is based on the Wronskian of a family of series. 

\begin{definition}
Let $f_1,\dotsc,f_\ell\in\puiseux X$. Their \emph{Wronskian} is the determinant of the \emph{Wronskian matrix}
\[\wr(f_1,\dotsc,f_\ell)=\det\begin{bmatrix}
    f_1             & f_2               & \dotsb & f_\ell   \\
    f_1'            & f_2'              & \dotsb & f_\ell'  \\
    \vdots          & \vdots            &        & \vdots   \\
    f_1^{(\ell-1)}  & f_2^{(\ell-1)}    & \dotsb & f_\ell^{(\ell-1)} 
\end{bmatrix}.\]
\end{definition}

The main property of the Wronskian is its relation to linear independence. 
The following result is classical (see for instance~\cite{BoDu10}).

\begin{proposition}\label{prop:wronskian}
The Wronskian of $f_1,\dotsc,f_\ell$ is nonzero if and only if the $f_j$'s are linearly independent over $\closure\KK$.
\end{proposition}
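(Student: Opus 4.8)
The plan is to prove both directions separately. The easy direction is: if the $f_j$'s are linearly dependent over $\closure\KK$, then the Wronskian vanishes. Suppose $\sum_j \lambda_j f_j = 0$ with not all $\lambda_j$ zero. Since differentiation is $\closure\KK$-linear, differentiating this relation $i$ times gives $\sum_j \lambda_j f_j^{(i)} = 0$ for every $i\ge 0$. Hence the columns of the Wronskian matrix satisfy a nontrivial linear relation with constant coefficients, so the matrix is singular and $\wr(f_1,\dots,f_\ell)=0$.

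The converse is the substantive direction: if the $f_j$'s are linearly independent over $\closure\KK$, then $\wr(f_1,\dots,f_\ell)\neq 0$. I would argue by induction on $\ell$, the case $\ell=1$ being clear (a single nonzero series has nonzero Wronskian, and a linearly independent singleton is nonzero). For the inductive step, assume the claim for families of size $\ell-1$. Working in the field of Puiseux series, one may pass to the field of fractions and consider the logarithmic-derivative / division trick: if some $f_j$, say $f_\ell$, is nonzero (which it is, being part of an independent family), divide column operations reduce matters to the functions $g_j = (f_j/f_\ell)'$ for $j=1,\dots,\ell-1$. A standard computation shows $\wr(f_1,\dots,f_\ell) = f_\ell^{\ell}\cdot \wr(g_1,\dots,g_{\ell-1})$ (this is the classical identity; one verifies it by row/column manipulation on the Wronskian matrix). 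It then remains to check that $g_1,\dots,g_{\ell-1}$ are linearly independent over $\closure\KK$: if $\sum_{j<\ell}\mu_j g_j=0$, then $\sum_{j<\ell}\mu_j (f_j/f_\ell)$ is a constant $c\in\closure\KK$ (here we use that the only Puiseux series with zero derivative are the constants, which holds in characteristic $0$), giving $\sum_{j<\ell}\mu_j f_j - c f_\ell = 0$, hence all $\mu_j$ and $c$ vanish by independence of the $f_j$'s; so the $\mu_j$ are all zero. By the induction hypothesis $\wr(g_1,\dots,g_{\ell-1})\neq0$, and since $f_\ell\neq 0$ we conclude $\wr(f_1,\dots,f_\ell)\neq0$.

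The main obstacle is purely the characteristic-zero input used in two places: that a Puiseux series with vanishing derivative must be constant (so that $\sum\mu_j(f_j/f_\ell)$ being a primitive of $0$ forces it to lie in $\closure\KK$), and implicitly that the derivative operator on $\puiseux X$ is well behaved (term-by-term differentiation of $\sum_{t\ge t_0} f_t X^{t/d}$ makes sense and commutes with the field operations). Both are standard for Puiseux series over a field of characteristic $0$, so the argument goes through; since the paper explicitly cites~\cite{BoDu10} for this classical fact, I would keep the write-up brief and refer there for the detailed verification of the Wronskian identity $\wr(f_1,\dots,f_\ell)=f_\ell^{\ell}\wr\bigl((f_1/f_\ell)',\dots,(f_{\ell-1}/f_\ell)'\bigr)$.
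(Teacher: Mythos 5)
Your proof is correct. The paper does not actually prove Proposition~\ref{prop:wronskian}; it simply states the result as classical and cites \cite{BoDu10}, which gives essentially the argument you reproduce (the easy direction by differentiating a linear relation, and the converse by induction using the reduction identity $\wr(f_1,\dots,f_\ell)=\pm\,f_\ell^{\ell}\,\wr\bigl((f_1/f_\ell)',\dots,(f_{\ell-1}/f_\ell)'\bigr)$ together with the characteristic-zero fact that the kernel of $d/dX$ on $\puiseux X$ is $\closure\KK$). So your write-up fills in precisely the content the paper delegates to the reference, and both characteristic-zero inputs you flag are indeed the ones that matter for Puiseux series.
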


We first need an easy lemma, already proved in~\cite{ChaGreKoiPoStr13,ChaGreKoiPoStr14} in the context of polynomials. The exact same proof remains valid with Puiseux series.

\begin{lemma} \label{lemma:valLowerBound}
Let $f_1$, \dots, $f_\ell\in\puiseux X$ be Puiseux series in the variable $X$. Then
\[\val(\wr(f_1,\dotsc,f_\ell))\ge\sum_{j=1}^\ell \val(f_j) -\binom{\ell}{2}\text.\]
\end{lemma}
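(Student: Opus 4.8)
The plan is to expand the Wronskian by the Leibniz (permutation) formula for determinants and to bound the valuation of each resulting monomial separately, using the elementary fact that differentiating a Puiseux series lowers its valuation by at most one.

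First I would record the key sub-fact: for any $\phi\in\puiseux X$ and any integer $i\ge0$, one has $\val(\phi^{(i)})\ge\val(\phi)-i$. It suffices to prove it for $i=1$ and iterate. Writing $\phi=\sum_{t\ge t_0}f_tX^{t/d}$ with $f_{t_0}\neq0$ and $v=\val(\phi)=t_0/d$, we get $\phi'=\sum_{t\ge t_0}(t/d)f_tX^{t/d-1}$; since $\KK$ has characteristic $0$, the coefficient $(t_0/d)f_{t_0}$ of $X^{v-1}$ vanishes only when $t_0=0$, so in every case $\val(\phi')\ge v-1$ (with the usual convention $\val(0)=+\infty$, which also handles $\phi=0$).

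Next, by the Leibniz formula,
\[\wr(f_1,\dotsc,f_\ell)=\sum_{\sigma\in S_\ell}\operatorname{sgn}(\sigma)\prod_{i=0}^{\ell-1}f_{\sigma(i+1)}^{(i)}\text.\]
Fix $\sigma\in S_\ell$. Since $\val$ is a valuation on the field $\puiseux X$, it is additive over products, so
\[\val\Bigl(\prod_{i=0}^{\ell-1}f_{\sigma(i+1)}^{(i)}\Bigr)=\sum_{i=0}^{\ell-1}\val\bigl(f_{\sigma(i+1)}^{(i)}\bigr)\ge\sum_{i=0}^{\ell-1}\bigl(\val(f_{\sigma(i+1)})-i\bigr)=\sum_{j=1}^{\ell}\val(f_j)-\binom{\ell}{2}\text,\]
where I used the sub-fact in each row, the identity $\sum_{i=0}^{\ell-1}i=\binom{\ell}{2}$, and the fact that $\sigma$ permutes $\{1,\dotsc,\ell\}$. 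Since the valuation of a sum is at least the minimum of the valuations of its summands, the same lower bound holds for $\wr(f_1,\dotsc,f_\ell)$, which is the claim.

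The argument is entirely routine; there is no genuine obstacle. The only point that requires attention is the characteristic-zero hypothesis, which is exactly what guarantees that differentiation does not annihilate a leading term of nonzero exponent, hence that $\val(\phi')\ge\val(\phi)-1$; this would fail in positive characteristic (for instance $(X^p)'=0$), which is why the whole development is confined to fields of characteristic $0$.
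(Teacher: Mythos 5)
Your argument is correct and is the standard one (Leibniz expansion of the determinant, termwise valuation bound using $\val(\phi^{(i)})\ge\val(\phi)-i$, then the ultrametric inequality for sums); the paper itself does not reprove this lemma but cites~\cite{ChaGreKoiPoStr13,ChaGreKoiPoStr14}, where the same expansion is used for polynomials, and you correctly observe that the argument carries over verbatim to Puiseux series.

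One remark is off, however. You close by saying the characteristic-zero hypothesis is ``exactly what guarantees'' $\val(\phi')\ge\val(\phi)-1$ and that this would fail in positive characteristic, citing $(X^p)'=0$. But $\val(0)=+\infty$, which certainly satisfies $+\infty\ge p-1$; more generally, differentiation of any monomial either decreases its exponent by $1$ or kills it outright, and killing a term can only raise the valuation. So $\val(\phi^{(i)})\ge\val(\phi)-i$, and with it the whole of Lemma~\ref{lemma:valLowerBound}, is true over fields of arbitrary characteristic. What genuinely requires characteristic $0$ is Proposition~\ref{prop:wronskian} (vanishing of the Wronskian characterizes linear dependence), which is used alongside this lemma in the proof of Theorem~\ref{thm:val} but is a separate statement. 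It is worth keeping these two roles distinct, since the paper explicitly raises the question of extending the method to positive characteristic.
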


We aim to upper bound the valuation of the Wronskian of the family $(X^{\alpha_1}\phi^{\beta_1},\dotsc,X^{\alpha_\ell}\phi^{\beta_\ell})$. We need first the following lemma, borrowed from~\cite{KoiPoTa13}.

\begin{lemma}\label{lemma:wronskian}
Let $g$, $\phi$ and $f$ be as in Theorem~\ref{thm:val}, and let $g_Y=\frac{\partial g}{\partial Y}$.
Then 
\[\wr(X^{\alpha_1}\phi^{\beta_1},\dotsc,X^{\alpha_\ell}\phi^{\beta_\ell})=X^{A-\binom{\ell}{2}} \phi^{B-\binom{\ell}{2}} \frac{h_\ell(X,\phi)}{g_Y^{\ell(\ell-1)}(X,\phi)}\]
where $A=\sum_j\alpha_j$, $B=\sum_j\beta_j$ and $h_\ell$ is a polynomial of degree $(1+2d)\binom{\ell}{2}$ in each variable. 
\end{lemma}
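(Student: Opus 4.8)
The plan is to compute the Wronskian $\wr(X^{\alpha_1}\phi^{\beta_1},\dotsc,X^{\alpha_\ell}\phi^{\beta_\ell})$ by row and column operations that isolate a common power of $X$ and $\phi$, leaving a determinant whose entries are rational functions in $X$ and $\phi$ with controlled degrees. First I would set $f_j = X^{\alpha_j}\phi^{\beta_j}$ and observe that the natural derivation here is $d/dX$, with $\phi' = -g_X(X,\phi)/g_Y(X,\phi)$ (where $g_X = \partial g/\partial X$) since $g(X,\phi)=0$ and $g_Y(X,\phi)\ne 0$. Rather than differentiate directly, it is cleaner to pass to the logarithmic derivative $\theta = X\,d/dX$: one has $\theta f_j = (\alpha_j + \beta_j\, X\phi'/\phi) f_j = (\alpha_j + \beta_j\, \psi)f_j$ where $\psi := X\phi'/\phi = -X g_X(X,\phi)/(\phi\, g_Y(X,\phi))$. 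Then I would use the standard fact that the Wronskian with respect to $d/dX$ and the ``$\theta$-Wronskian'' $\det[\theta^{i-1} f_j]$ differ only by a power of $X$, namely $\wr(f_1,\dotsc,f_\ell) = X^{-\binom{\ell}{2}}\det[\theta^{i-1}f_j]_{i,j}$; this explains the exponent $-\binom{\ell}{2}$ in the statement for both $X$ and $\phi$ — the $\phi$-part will come out similarly below.

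Next I would factor $f_j$ out of the $j$-th column, writing $\det[\theta^{i-1}f_j] = \bigl(\prod_j f_j\bigr)\det\bigl[(\theta^{i-1}f_j)/f_j\bigr]$. Since $\prod_j f_j = X^{A}\phi^{B}$, this already produces the prefactor $X^{A-\binom{\ell}{2}}\phi^{B-\binom{\ell}{2}}$ once combined with the previous paragraph. It remains to analyze $M := \det\bigl[(\theta^{i-1}f_j)/f_j\bigr]$. The key claim is that $(\theta^{i-1}f_j)/f_j$ is a polynomial of degree $i-1$ in the single quantity $(\alpha_j + \beta_j\psi)$ with coefficients that are polynomials in $\psi$ and its $\theta$-derivatives: indeed $\theta^{i}f_j/f_j = \theta(\theta^{i-1}f_j/f_j) + (\theta^{i-1}f_j/f_j)(\alpha_j+\beta_j\psi)$, so by induction $\theta^{i-1}f_j/f_j = (\alpha_j+\beta_j\psi)^{i-1} + (\text{lower order in }(\alpha_j+\beta_j\psi))$. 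Performing column operations to remove the lower-order tails — which are shared across all columns at a fixed power of $(\alpha_j+\beta_j\psi)$ — reduces $M$ to a Vandermonde-type determinant $\det\bigl[(\alpha_j+\beta_j\psi)^{i-1}\bigr]_{i,j}$ up to a factor involving only $\psi$-derivatives, hence $M = \bigl(\prod_{j<j'}(\alpha_{j'}\beta_j - \alpha_j\beta_{j'})\bigr)\cdot(\text{rational function in }\psi)$. Since $\psi = -Xg_X(X,\phi)/(\phi\,g_Y(X,\phi))$, clearing denominators turns $M$ into $h_\ell(X,\phi)/g_Y(X,\phi)^{\ell(\ell-1)}$ for a polynomial $h_\ell$; the power $\ell(\ell-1) = 2\binom{\ell}{2}$ on $g_Y$ arises because each of the $\ell$ columns contributes derivatives up to order $\ell-1$, and each application of $\theta$ to $\psi$ introduces one extra factor of $g_Y$ in the denominator — a careful bookkeeping of $\theta\psi$ gives $\theta\psi = (\text{polynomial in }X,\phi)/g_Y(X,\phi)^2$, and iterating is what forces the exponent.

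The degree bound on $h_\ell$ is the last point and, I expect, the place demanding the most care. Here one tracks degrees through the above construction: $g$ has total degree $d$, so $g_X$, $g_Y$ have degree at most $d-1$ (really $d$ suffices as a bound), and $\psi = -Xg_X/(\phi g_Y)$ contributes numerator degree $O(d)$ over denominator $g_Y$; each $\theta$ applied to a ratio $P(X,\phi)/g_Y^m$ produces $(\theta P \cdot g_Y - m P\, \theta g_Y)/g_Y^{m+1}$, raising the numerator degree by roughly $\deg g_Y \le d$ (after substituting $\theta\phi = X\phi' = -Xg_X/g_Y$, which again contributes degree $O(d)$ over one more power of $g_Y$). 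Running this through the $\ell\times\ell$ determinant — a sum of $\ell!$ products, each a product of $\ell$ entries with derivative orders $0,1,\dotsc,\ell-1$ in some order — the worst numerator degree is a sum $\sum_{i=0}^{\ell-1}(\text{degree contributed by order }i)$, which after multiplying out the $g_Y$ denominators to reach the common denominator $g_Y^{\ell(\ell-1)}$ works out to at most $(1+2d)\binom{\ell}{2}$ in each variable. I would organize this as an explicit induction on $\ell$ (or on the derivative order), proving simultaneously the shape $X^{A-\binom{\ell}{2}}\phi^{B-\binom{\ell}{2}}h_\ell/g_Y^{\ell(\ell-1)}$ and the bidegree bound $\deg_X h_\ell, \deg_\phi h_\ell \le (1+2d)\binom{\ell}{2}$; the main obstacle is keeping the degree accounting tight enough to hit exactly the constant $1+2d$ rather than a looser bound, which is why the substitution $\phi' = -g_X/g_Y$ must be done uniformly and the $g_Y$-powers tracked exactly at each differentiation step. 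Since this lemma is cited from~\cite{KoiPoTa13}, I would also be prepared to simply invoke that reference for the delicate degree computation and reproduce only the structural identity here.
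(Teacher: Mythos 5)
The paper does not prove this lemma: the sentence preceding it reads ``We need first the following lemma, borrowed from~\cite{KoiPoTa13},'' and no argument is given. So there is no in-paper proof to compare against, and your fallback of ``simply invoke that reference'' is in fact exactly what the paper does.

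On the merits of your sketch: the opening move is sound. Passing to $\theta = X\,d/dX$ and using $\det[\theta^{i-1}f_j] = X^{\binom{\ell}{2}}\wr(f_1,\dotsc,f_\ell)$ (which holds because $\theta^{m}$ is a unit lower-triangular combination of $1, XD, X^2D^2,\dotsc$) together with factoring $f_j = X^{\alpha_j}\phi^{\beta_j}$ out of column $j$ cleanly produces the prefactor $X^{A-\binom{\ell}{2}}\phi^{B}$, and it is plausible (though you do not actually show it) that tracking the $\phi$ in the denominator of $\psi = X\phi'/\phi$ supplies the remaining $\phi^{-\binom{\ell}{2}}$.

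The step that would fail is the ``Vandermonde-type reduction.'' You claim the lower-order tails of $\theta^{i-1}f_j/f_j$, viewed as a polynomial in $q_j := \alpha_j+\beta_j\psi$, have coefficients ``shared across all columns,'' so that row/column operations reduce $M = \det[\theta^{i-1}f_j/f_j]$ to $\det[q_j^{\,i-1}]$. Already for $i=3$ one has $\theta^{2}f_j/f_j = q_j^2 + \beta_j\,\theta\psi$, and the tail $\beta_j\,\theta\psi$ depends on $j$ through $\beta_j$, which is not a function of $q_j$ alone. No row operation with $j$-independent multipliers can cancel $\beta_j\,\theta\psi$ using rows $1$ (entries $1$) and $2$ (entries $\alpha_j + \beta_j\psi$) unless all $\alpha_j = 0$. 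More generally $\theta^{i-1}f_j/f_j$ is a bivariate polynomial of degree $i-1$ in $(\alpha_j,\beta_j)$ with coefficients in $\KK(\psi,\theta\psi,\dotsc)$, and the determinant is a bivariate ``generalized Vandermonde'' whose evaluation is genuinely more delicate than you indicate. The exact exponent $g_Y^{\ell(\ell-1)}$ and the degree bound $(1+2d)\binom{\ell}{2}$ would have to come out of that more careful bookkeeping; you flag both as unverified, and rightly so. As a reconstruction the sketch is incomplete, but since your plan B --- citing~\cite{KoiPoTa13} --- coincides with what the paper itself does, that is a defensible resolution.
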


It remains to obtain a valuation bound for a Puiseux series in terms of a vanishing polynomial.

\begin{lemma}\label{lemma:polySeries}
Let $g$ and $\phi$ be as in Theorem~\ref{thm:val}. Let $h(X,Y)$ be a polynomial of degree at most $\delta$ in each variable. Then $\left|\val(h(X,\phi))\right|\le 2d\delta$.
\end{lemma}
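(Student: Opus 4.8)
The plan is to sandwich $\val(h(X,\phi))$: a lower bound comes straight from the valuation of a sum, and the upper bound from a resultant identity. Write $h=\sum_{b=0}^{n}h_b(X)\,Y^b$ with $n=\deg_Y(h)\le\delta$ and $\deg_X(h_b)\le\delta$, and set $m=\deg_Y(g)$ (so $m\ge1$, since $\partial g/\partial Y\ne0$). By Gauss's lemma $g$ is irreducible over $\KK(X)$, hence, being in characteristic $0$, separable; so it has $m$ \emph{distinct} roots $\phi_1,\dots,\phi_m\in\puiseux X$, one of which is $\phi$, say $\phi=\phi_1$, and I write $v_i=\val(\phi_i)$. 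Two harmless reductions: we may assume $h(X,\phi)\ne0$ (otherwise $g\mid h$ and $\val(h(X,\phi))$ is not defined, a case excluded in every application), and $Y\nmid g$ (otherwise irreducibility forces $g=cY$, hence $d=1$, $\phi=0$, $h(X,\phi)=h_0(X)$, and $|\val(h(X,\phi))|\le\delta\le2d\delta$). Also, by the Newton--Puiseux theorem each $v_i$ is minus the slope of an edge of the lower hull of $\Newt(g)$, whose two endpoints are support points of the total-degree-$d$ polynomial $g$; hence $|v_i|\le d$. The lower bound is then immediate, since the valuation of a sum is at least the minimum of the summands' valuations: $\val(h(X,\phi))\ge\min_b(\val(h_b)+b v_1)\ge\min_b(b v_1)\ge-n|v_1|\ge-\delta d\ge-2d\delta$.

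For the upper bound I would use the resultant $r=\res_Y(g,h)\in\KK[X]$, which is nonzero since $g\nmid h$. Reading off the Sylvester matrix, $0\le\val(r)\le\deg_X(r)\le n\deg_X(g)+m\deg_X(h)\le nd+m\delta$. On the other hand, the classical product formula for the resultant reads $r=g_0^{\,n}\prod_{i=1}^{m}h(X,\phi_i)$, where $g_0$ is the leading coefficient of $g$ in $Y$; taking valuations, using $\val(g_0)\ge0$ together with the same elementary bound $\val(h(X,\phi_i))\ge-n\,v_i^{-}$ for $i\ge2$ (where $v_i^{-}:=\max(0,-v_i)$), we get
\[\val(h(X,\phi))=\val(r)-n\val(g_0)-\sum_{i=2}^m\val(h(X,\phi_i))\le(nd+m\delta)+n\sum_{i=1}^m v_i^{-}.\]

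The heart of the matter — and, I expect, the main obstacle — is the estimate $\sum_{i=1}^m v_i^{-}\le d-m$. Here one invokes Newton--Puiseux once more: an edge of the lower hull of $\Newt(g)$ joining $(\beta_1,\alpha_1)$ to $(\beta_2,\alpha_2)$ with $\beta_1<\beta_2$ accounts for exactly $\beta_2-\beta_1$ of the roots, each of valuation $-(\alpha_2-\alpha_1)/(\beta_2-\beta_1)$, hence contributes $\max(0,\alpha_2-\alpha_1)$ to $\sum_i v_i^{-}$. Since the lower hull is convex, its edge slopes are nondecreasing, so the positive-slope edges form a terminal segment of the chain and their contributions telescope to $a_{\mathrm{right}}-a_{\min}$, where $a_{\mathrm{right}}$ is the second coordinate of the rightmost vertex of $\Newt(g)$ and $a_{\min}\ge0$ the smallest second coordinate attained on the lower hull. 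As $(m,a_{\mathrm{right}})$ is a term of a polynomial of total degree $d$, one has $a_{\mathrm{right}}\le d-m$, and therefore $\sum_i v_i^{-}\le d-m$. Note that the crude per-root bound $|v_i|\le d$ only yields $\sum_i v_i^{-}\le(m-1)d$, too weak by a factor of $d$; the telescoping is precisely what buys that factor back.

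Plugging this in,
\[\val(h(X,\phi))\le(nd+m\delta)+n(d-m)=2nd+m(\delta-n)\le2d\delta,\]
the last inequality because $n\le\delta$ (hence $2nd\le nd+\delta d$) and $m\le d$ with $\delta\ge n$ (hence $m(\delta-n)\le d(\delta-n)=d\delta-dn$). Together with the lower bound, $|\val(h(X,\phi))|\le2d\delta$, as claimed.
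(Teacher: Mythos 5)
Your proof is correct, but it takes a genuinely different route from the paper's. The paper forms the \emph{bivariate} resultant $r(X,Y)=\res_Z\bigl(g(X,Z),\,Y-h(X,Z)\bigr)$, observes that $h(X,\phi)$ is a $Y$-root of $r$ in $\puiseux X$, bounds $\deg_X(r)\le 2d\delta$ by inspecting the Sylvester matrix, and then applies the Newton--Puiseux theorem \emph{to $r$} to bound the valuation of any of its $Y$-roots; this gives the upper and lower bounds simultaneously in one stroke. You instead work with the \emph{univariate} resultant $r(X)=\res_Y(g,h)\in\KK[X]$ and its product formula $r=g_0^{\,n}\prod_i h(X,\phi_i)$, get the lower bound on $\val(h(X,\phi))$ by the elementary estimate on a sum, and then recover the upper bound by combining $0\le\val(r)\le\deg_X(r)\le nd+m\delta$ with lower bounds on the valuations of the conjugate factors $h(X,\phi_i)$. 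The extra ingredient you need — and which the paper's route avoids entirely — is the telescoping estimate $\sum_i v_i^-\le d-m$ obtained by applying Newton--Puiseux to $g$ itself and summing the rises of the positive-slope edges of its lower hull; this is the right idea, and you are correct that the crude per-root bound $|v_i|\le d$ would not suffice. Your approach has the minor advantage of giving the slightly sharper intermediate bound $\val(h(X,\phi))\le 2nd+m(\delta-n)$ in terms of the actual $Y$-degrees $n=\deg_Y h$, $m=\deg_Y g$, but it is longer and relies on several case splits (excluding $Y\mid g$, handling $h(X,\phi)=0$, separability); the paper's computation is more uniform and transfers the whole burden to a single Newton polygon of an auxiliary polynomial. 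Both are valid proofs of the lemma.
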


\begin{proof} 
Let us consider the resultant 
\[r(X,Y)=\res_Z(g(X,Z),Y-h(X,Z)).\]
Then $r(X,h(X,\phi))=0$ vanishes since $\phi$ is a common root of both polynomials in the resultant.

Let us now consider the degree of $r$ in $X$. The coefficients of $g(X,Z)$ viewed as a polynomial in $Z$ have degree at most $d$ 
in $X$ by definition. In the Sylvester matrix, $\delta$ rows are made of the coefficients of $g$ since $Y-h(X,Z)$ has degree $\delta$ in $Z$. In the same way, the Sylvester matrix contains $d$ 
rows with the coefficients of $Y-h(X,Z)$, each of which has degree at most $\delta$ in $X$. Altogether, each term in the resultant has degree at most $2d\delta$ in $X$. 

We have shown that $h(X,\phi)$ is a Puiseux series which cancels a polynomial $r$ of degree at most $2d\delta$ in $X$. By Newton-Puiseux Theorem, the absolute value of its valuation is at most $2d\delta$. 
\end{proof} 

\begin{proof}[of Theorem~\ref{thm:val}] 
Let $W$ be the Wronskian of the family $(X^{\alpha_1}\phi^{\beta_1}, \dotsc, X^{\alpha_\ell}\phi^{\beta_\ell})$ and $\psi=f(X,\phi)$. Without loss of generality, let us assume that $\min_j(\alpha_j+v\beta_j)$ is attained for $j=1$. 

Using column operations on the Wronskian matrix, one can replace the first column by $\psi$ and its derivatives. The determinant of the new matrix is the Wronskian $W_\psi$ of $\psi$, $X^{\alpha_2}\phi^{\beta_2}$, \dots, $X^{\alpha_\ell}\phi^{\beta_\ell}$. We have $W_\psi=a_1W$ and their valuations coincide. By Lemma~\ref{lemma:valLowerBound}, 
\[\val(W_\psi)\ge\val(\psi)+\sum_{j>1} (\alpha_j+v\beta_j)-\binom{\ell}{2}\text.\]

On the other hand, since the family $(X^{\alpha_j}\phi^{\beta_j})_j$ is linearly independent, there exists a nonzero $h_\ell$ such that
\[W=X^{A-\binom{\ell}{2}}\phi^{B-\binom{\ell}{2}}\frac{h_\ell(X,\phi)}{g_Y^{\ell(\ell-1)}(X,\phi)}\]
according to Lemma~\ref{lemma:wronskian}. Moreover $\val(h_\ell(X,\phi))\le 2d(2d+1)\binom{\ell}{2}$ 
and $\val(g_Y(X,\phi))\ge -2d^2$ 
by Lemma~\ref{lemma:polySeries}. Therefore, 
\[\val(W)\le A-\binom{\ell}{2}+v\left(B-\binom{\ell}{2}\right)+2d(4d+1)\binom{\ell}{2} 
\text.\]

Since $A=\sum_j\alpha_j$ and $B=\sum_j\beta_j$, 
\[\val(\psi)\le \alpha_1+v\beta_1 -v\binom{\ell}{2}+2d(4d+1)\binom{\ell}{2}\text.\] 
The conclusion follows, since $\alpha_1+v\beta_1=\min_j(\alpha_j+v\beta_j)$.
\end{proof} 


\subsection{Gap Theorem and algorithm} 

\begin{theorem}[Gap Theorem]
Let $v\in\QQ$, $d\in\NN^\star$, and $f=f_1+f_2$, where
\[f_1=\sum_{j=1}^\ell c_jX^{\alpha_j} Y^{\beta_j}\quad\text{and}\quad f_2=\sum_{j=\ell+1}^k c_jX^{\alpha_j} Y^{\beta_j}\]
satisfy $\alpha_j+v\beta_j\le\alpha_{j+1}+v\beta_{j+1}$ for $1\le j<k$. Assume that $\ell$ is the smallest index, if it exists, such that
\begin{equation*} 
\alpha_{\ell+1}+v\beta_{\ell+1}>(\alpha_1+v\beta_1)+(2d(4d+1)-v)\binom{\ell}{2}\text.
\end{equation*}

Then for every irreducible polynomial $g$ of degree at most $d$ 
such that $g$ has a root of valuation $v$ in $\puiseux X$, 
\[\mult{g}{f}=\min(\mult{g}{f_1},\mult{g}{f_2})\text.\]
\end{theorem}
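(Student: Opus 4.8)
The plan is to mimic the classical Gap-Theorem argument from the linear case: if $g$ has a root $\phi$ of valuation $v$, then $g \mid f$ iff $f(X,\phi) = 0$, and more generally $\mult{g}{f}$ is governed by the order of vanishing of $f$ along $\phi$. So I would first reduce to showing that, under the gap hypothesis, $f(X,\phi) = 0$ if and only if $f_1(X,\phi) = f_2(X,\phi) = 0$; the statement about multiplicities will then follow by passing to quotients.

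The core step is the ``only if'' direction of that equivalence, and here is where Theorem~\ref{thm:val} enters. Suppose for contradiction that $f(X,\phi) = 0$ but $f_1(X,\phi) \neq 0$ (hence also $f_2(X,\phi) \neq 0$, since they must cancel). Among the terms of $f_1$, extract a maximal subfamily of terms $X^{\alpha_j}\phi^{\beta_j}$ that is linearly independent over $\closure\KK$; call $\ell'$ its size and let $f_1'$ be the corresponding subsum, so that $f_1'(X,\phi)$ and $f_1(X,\phi)$ have the same valuation (the discarded terms are $\closure\KK$-linear combinations of the kept ones, so removing them does not change whether the sum vanishes — one must be slightly careful here, but a standard linear-algebra manipulation handles it; alternatively one observes directly that a nonzero value of $f_1(X,\phi)$ is a nonzero $\closure\KK$-linear combination of the independent family and hence has valuation at least $\min_j(\alpha_j + v\beta_j)$ from within that family). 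Apply Theorem~\ref{thm:val} to $f_1'$: its valuation is at most
\[
\min_{j \le \ell'}(\alpha_j + v\beta_j) + (2d(4d+1) - v)\binom{\ell'}{2}
\le (\alpha_1 + v\beta_1) + (2d(4d+1) - v)\binom{\ell}{2},
\]
using $\ell' \le \ell$, monotonicity of $\binom{\cdot}{2}$, and the fact that $\min_j(\alpha_j+v\beta_j)$ over the whole of $f_1$ is $\alpha_1 + v\beta_1$ by the assumed ordering. On the other hand, every term of $f_2$ has $\alpha_j + v\beta_j \ge \alpha_{\ell+1} + v\beta_{\ell+1}$, which by the gap hypothesis strictly exceeds that same bound; hence $\val(f_2(X,\phi)) > \val(f_1(X,\phi))$, so $\val(f(X,\phi)) = \val(f_1(X,\phi)) < \infty$, contradicting $f(X,\phi) = 0$. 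Therefore $f_1(X,\phi) = 0$, and then $f_2(X,\phi) = f(X,\phi) - f_1(X,\phi) = 0$ as well. The converse direction is trivial.

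For the multiplicity statement: write $m = \mult{g}{f}$ and $m_i = \mult{g}{f_i}$. Since $f = f_1 + f_2$, clearly $m \ge \min(m_1, m_2)$. For the reverse inequality, assume WLOG $m_1 \le m_2$ and suppose $m > m_1$; then $g^{m_1+1} \mid f$ and $g^{m_1+1} \mid f_2$, so $g^{m_1+1} \mid f_1$, contradicting the definition of $m_1$ — unless $m_1 = m_2$, in which case $\min(m_1,m_2) = m_1 \le m$ is what we want anyway, but we still need $m \le m_1$. The clean way is to work over the Puiseux field: $g^r \mid f$ in $\KK[X,Y]$ iff $(Y-\phi)^r \mid f$ in $\puiseux{X}[Y]$ (because $g$ is irreducible with $\partial g/\partial Y \neq 0$, so $\phi$ is a simple root of $g$ and $g$ is, up to a unit in $\puiseux{X}$, the minimal polynomial of $\phi$; all conjugate roots contribute symmetrically). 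Then $(Y-\phi)^r \mid f$ iff $f, \partial f/\partial Y, \dots, \partial^{r-1} f/\partial Y^{r-1}$ all vanish at $Y = \phi$. Now each $\partial^i f/\partial Y^i = \partial^i f_1/\partial Y^i + \partial^i f_2/\partial Y^i$ is again a lacunary polynomial to which the same ordering and the very same gap hypothesis apply (differentiating in $Y$ only multiplies the $j$-th coefficient by $\beta_j(\beta_j-1)\cdots(\beta_j-i+1)$ and decreases $\beta_j$, which only decreases the quantities $\alpha_j + v\beta_j$ uniformly and can only kill terms, never reorder the survivors — so the split at index $\ell$ still works, possibly with fewer terms on each side). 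So the first part of the proof, applied to each derivative, gives $\partial^i f/\partial Y^i(X,\phi) = 0 \iff \partial^i f_1/\partial Y^i(X,\phi) = 0$ and $\partial^i f_2/\partial Y^i(X,\phi) = 0$. Taking the conjunction over $0 \le i < r$ yields $(Y-\phi)^r \mid f \iff (Y-\phi)^r \mid f_1$ and $(Y-\phi)^r \mid f_2$, i.e. $\mult{g}{f} = \min(\mult{g}{f_1}, \mult{g}{f_2})$.

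The main obstacle I anticipate is the bookkeeping around the linear-independence hypothesis of Theorem~\ref{thm:val}: the theorem is stated for a family that \emph{is} linearly independent, whereas the terms of $f_1$ need not be, so one has to extract an independent subfamily and argue carefully that (i) the valuation of the original $f_1(X,\phi)$ is still controlled by the bound for the subfamily, and (ii) the relevant ``$\min$'' and ``$\binom{\ell'}{2}$'' quantities only improve under passing to the subfamily. A secondary subtlety is the interaction of $Y$-differentiation with the gap: one must confirm that the same index $\ell$ still witnesses a gap for every $\partial^i f/\partial Y^i$, which is true because differentiation preserves the order of the sequence $(\alpha_j + v\beta_j)$ on the surviving terms and the right-hand side of the gap inequality only shrinks when $\ell'$ shrinks (and $v$ is replaced by nothing — it stays the same). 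Neither obstacle is deep, but both need to be written with care to make the reduction to Theorem~\ref{thm:val} airtight.
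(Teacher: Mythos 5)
Your overall strategy matches the paper's exactly: evaluate at a Puiseux root $\phi$ of $g$, use Theorem~\ref{thm:val} to bound $\val(f_1(X,\phi))$ from above, observe that $\val(f_2(X,\phi))$ is too large for the two to cancel, and then pass to $Y$-derivatives to handle multiplicities. But there is a genuine gap in the central bounding step, and it stems from never invoking the \emph{minimality} of $\ell$, which is precisely the hypothesis the paper's proof relies on at that point.

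Concretely, after passing to a maximal linearly independent subfamily, Theorem~\ref{thm:val} bounds $\val(f_1(X,\phi))$ by $\min_{t}(\alpha_{j_t}+v\beta_{j_t}) + (2d(4d+1)-v)\binom{m}{2}$, where the minimum runs over the indices $j_t$ of the chosen basis with nonzero coefficient $b_t$ in the rewriting $f_1(X,\phi)=\sum_t b_t X^{\alpha_{j_t}}\phi^{\beta_{j_t}}$ — \emph{not} over $\{1,\dotsc,\ell'\}$, and in particular not necessarily including $j=1$. You write this minimum as $\min_{j\le\ell'}(\alpha_j+v\beta_j)$ and then replace it by $\alpha_1+v\beta_1$, but the smallest basis index $j_1$ can be strictly larger than $1$ (for instance if term $1$ is a scalar multiple of term $2$ and their coefficients cancel, the rewritten combination has $b_1=0$ and the surviving indices start higher), and then $\alpha_{j_1}+v\beta_{j_1}$ can strictly exceed $\alpha_1+v\beta_1$. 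The paper closes this by using that $\ell$ is the \emph{smallest} index with a gap, which gives $\alpha_{j_1}+v\beta_{j_1}\le\alpha_1+v\beta_1+(2d(4d+1)-v)\binom{j_1-1}{2}$, and then combines $\binom{j_1-1}{2}+\binom{m}{2}\le\binom{\ell}{2}$ via superadditivity and $j_1+m-1\le\ell$. Without this step your chain of inequalities does not go through. Relatedly, the parenthetical claim that $f_1'(X,\phi)$ and $f_1(X,\phi)$ have the same valuation is false in general (the coefficients change when you rewrite in terms of the basis); the correct move is the one you mention as an ``alternative'' — rewrite $f_1(X,\phi)$ as a nonzero combination of the independent family and apply Theorem~\ref{thm:val} to \emph{that} combination, which is exactly what the paper does.

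On the multiplicity part your route through $\puiseux{X}[Y]$ and $Y$-derivatives is fine and equivalent to the paper's. You do acknowledge that differentiation can kill terms, but you should note the paper's device for making that harmless: multiply $f$ by a sufficiently large power of $X$ and $Y$, which leaves the non-monomial factors and their multiplicities unchanged while guaranteeing that no term vanishes in the first $d$ derivatives; this spares you from re-verifying that the gap survives when terms drop out, which otherwise again requires the minimality-plus-superadditivity argument.
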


\begin{proof} 
Let us view $g$ as a polynomial in $\KK[X][Y]$, and let $\phi\in\puiseux X$ be a root of $g$ of valuation $v$. Then $g$ divides $f$ (resp. $f_1$, resp. $f_2$) if and only if $f(X,\phi)=0$ (resp. $f_1(X,\phi)=0$, resp. $f_2(X,\phi)=0$). And if $g$ divides both $f_1$ and $f_2$, it divides $f$. Let us assume that $g$ does not divide $f_1$ and prove that in such a case, it does not divide $f$ either. Let $\Delta=2d(4d+1)-v$. 

Since $g$ does not divide $f_1$, $f_1(X,\phi)$ is nonzero.  Let us consider a basis $(X^{\alpha_{j_t}}\phi^{\beta_{j_t}})_{1\le t\le m}$ 
of the family $(X^{\alpha_j}\phi^{\beta_j})_{1\le j\le\ell}$ and rewrite $f_1(X,\phi)$ as  
\[f_1(X,\phi)=\sum_{t=1}^m b_t X^{\alpha_{j_t}}\phi^{\beta_{j_t}}\]
where $b_1$, \dots, $b_m$ are linear combinations of $c_1$, \dots, $c_\ell$. Without loss of generality, we assume that $b_t\neq0$ for all $t$. Using Theorem~\ref{thm:val}, the valuation of $f_1(X,\phi)$ is bounded by $\alpha_{j_1}+v\beta_{j_1}+\Delta\binom{m}{2}$. Furthermore, by minimality of $\ell$, $\alpha_{j_1}+v\beta_{j_1}\le \alpha_1+v\beta_1+\Delta\binom{j_1-1}{2}$. Thus
\[\val(f_1(X,\phi))\le \alpha_1+v\beta_1+\Delta\left(\binom{j_1-1}{2}+\binom{m}{2}\right)\text.\]
Since $j_1+m-1\le\ell$, we deduce that $\val(f_1(X,\phi))\le\alpha_1+v\beta_1+\Delta\binom{\ell}{2}$ by superadditivity of the function $\ell\mapsto\binom{\ell}{2}$. 

Now, $\val(f_2(X,\phi))\ge\alpha_{\ell+1}+v\beta_{\ell+1}>\val(f_1(X,\phi))$ by hypothesis. Hence $f(X,\phi)=f_1(X,\phi)+f_2(X,\phi)$ cannot vanish. That is, $g$ does not divide $f$.

To obtain the conclusion on the multiplicity of $g$ as a factor of $f$, it remains to apply the same proof to the successive derivatives of $f$, $f_1$ and $f_2$. The point is that these derivatives have the same form as $f$, $f_1$ and $f_2$ if no term vanishes. This can be ensured by multiplying $f$ by large powers of $X$ and $Y$, without changing its non-monomial factors.
\end{proof} 

In the Gap Theorem, we assumed that $\alpha_j+v\beta_j\le\alpha_{j+1}+v\beta_{j+1}$ for all $j$. That is, we put an order on the monomials which depends on the value of $v$. Since we aim to use this theorem with different values on $v$, we restate it without referring to the order: Let $\I=\{1,\dotsc,k\}$, and suppose that $\I$ can be partitioned into $\I_1\sqcup\I_2$ such that $\I_1=\{i\in\I: \alpha_i+v\beta_i\le \min_j(\alpha_j+v\beta_j)+\Delta\binom{\ell}{2}\}$ where $\Delta=2d(4d+1)-v$. Then any degree-$d$ polynomial $g$ which has a root of valuation $v$ in $\puiseux X$ satisfies $\mult{g}{f}=\min(\mult{g}{f_{|\I_1}},\mult{g}{f_{|\I_2}})$, where $f_{|\I_1}=\sum_{j\in\I_1}c_j X^{\alpha_j}Y^{\beta_j}$ and $f_{|\I_2}$ is defined similarly.

It is straightforward to extend the Gap Theorem to a partition of $\I$ into subsets $\I_1$, \dots, $\I_s$, using recursion: Let us rename $\I_2$ into $\J$. Suppose we have partitioned $\I$ as $(\bigsqcup_{u=1}^t \I_u)\sqcup\J$. We can partition $\J=\J_1\sqcup\J_2$ using the Gap Theorem with $f_{|\J}$. Then let $\I_{t+1}=\J_1$ and $\J=\J_2$. When the Gap Theorem stops working because there is no more gap, let $\I_s=\J$. For all $t$ and all $j_1,j_2\in\I_t$,
\begin{equation*} 
\bigl|(\alpha_{j_1}+v\beta_{j_1})-(\alpha_{j_2}+v\beta_{j_2})\bigr|\le \Delta\binom{|\I_t|-1}{2}\text.
\end{equation*}

For $1\le t\le s$, let $f_t=f_{|\I_t}$. The previous construction together with the Gap Theorem ensures that $\mult{g}{f}=\min_t(\mult{g}{f_t})$. 
Our goal is to refine the partition of $\I$ into smaller subsets such that the polynomials obtained from this partition after normalization have low degree.

We first prove an easy lemma useful to give bounds in the next theorem.

\begin{lemma}\label{lemma:slopes}
Let $v_1=p_1/q_1$ and $v_2=p_2/q_2$ two rational numbers such that $0<p_1,q_1,p_2,q_2\le d$ and $v_1>v_2$. 

Then $1/(v_1-v_2)\le d^2$ and $(v_1+v_2)/(v_1-v_2)\le 2d^2$.
\end{lemma}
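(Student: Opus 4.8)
The plan is to reduce both inequalities to a single observation: once $v_1-v_2$ is written over a common denominator, its numerator is a positive integer, hence at least $1$. So the first step is to put $v_1=p_1/q_1$ and $v_2=p_2/q_2$ over the common denominator $q_1q_2$, giving
\[
v_1-v_2=\frac{p_1q_2-p_2q_1}{q_1q_2},\qquad v_1+v_2=\frac{p_1q_2+p_2q_1}{q_1q_2}.
\]
Since $q_1,q_2>0$ and $v_1>v_2$, we get $p_1q_2>p_2q_1$; as $p_1q_2$ and $p_2q_1$ are integers, this forces $p_1q_2-p_2q_1\ge 1$.

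For the first bound I would then write $1/(v_1-v_2)=q_1q_2/(p_1q_2-p_2q_1)\le q_1q_2\le d^2$, using the lower bound on the denominator together with the hypothesis $q_1,q_2\le d$. For the second bound, the common factor $q_1q_2$ cancels in the ratio:
\[
\frac{v_1+v_2}{v_1-v_2}=\frac{p_1q_2+p_2q_1}{p_1q_2-p_2q_1}\le p_1q_2+p_2q_1\le 2d^2,
\]
where the last step uses $p_1,p_2,q_1,q_2\le d$.

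There is essentially no obstacle here, and I do not expect any hidden difficulty; the lemma is pure bookkeeping. The only point worth flagging is that one should not bound $v_1+v_2$ (which is $\le 2d$) and $1/(v_1-v_2)$ (which is $\le d^2$) separately, since that would only yield $2d^3$; the sharper bound $2d^2$ comes precisely from the cancellation of $q_1q_2$ in $(v_1+v_2)/(v_1-v_2)$, after which one is left bounding a single positive integer numerator.
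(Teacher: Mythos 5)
Your proof is correct and follows essentially the same route as the paper: write $v_1-v_2$ over the common denominator $q_1q_2$, use that $p_1q_2-p_2q_1$ is a positive integer to bound it below by $1$, and note the cancellation of $q_1q_2$ in $(v_1+v_2)/(v_1-v_2)$. The paper is merely terser; your remark about not bounding the two factors separately correctly identifies why the cancellation matters.
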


\begin{proof}
We have
\[\frac{p_1}{q_1}-\frac{p_2}{q_2}=\frac{p_1q_2-p_2q_1}{q_1q_2}\]
and since $v_1>v_2$, the numerator is a nonzero integer and $v_1-v_2\ge 1/d^2$. Similarly,
\[\frac{v_1+v_2}{v_1-v_2}=\frac{p_1q_2+p_2q_1}{p_1q_2-p_2q_1}\le 2d^2\text.\qedhere
\]
\end{proof}

\begin{theorem}\label{thm:partition}
Let $f,g\in\KK[X,Y]$ such that $f$ has $k$ monomials and $g$ has a degree $d$ and is not weighted homogeneous. There exists a deterministic algorithm that computes in time polynomial in $k$ and $d$ a set of at most $k$ polynomials $f_1^\circ$, \dots, $f_s^\circ$, such that each $f_t^\circ$ has $\ell_t$ nonzero terms, with $\sum_t\ell_t=k$, and degree at most $\O(d^4\binom{\ell_t-1}{2})$, and such that 
\[\mult{g}{f}=\min_{1\le t\le s}(\mult{g}{f_t^\circ})\text.\]
\end{theorem}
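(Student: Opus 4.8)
The plan is to combine the multislope version of the Gap Theorem with a divide-and-conquer over all the admissible slopes, and then bound the degrees of the resulting normalized pieces using Lemma~\ref{lemma:slopes}. Since $g$ has degree $d$ and is not weighted homogeneous, its Newton polygon has at least two non-parallel edges, and by the Corollary (Ostrowski + Newton--Puiseux) every root $\phi\in\puiseux X$ of $g$ has valuation $v=-p/q$ where $q>0$ and $|p|,q\le d$; moreover, since $g$ is not weighted homogeneous and (after reducing to the normalized, non-monomial case) has edges of distinct slopes in its lower hull, it actually has at least one such root. So I would first enumerate the finite set $S$ of candidate valuations: these are the negatives of the slopes $p/q$ of the edges of $\Newt(f)$ with $|p|,q\le d$, a set of size at most $k$, computable in time polynomial in $k$ and $\log D$ from $\Newt(f)$.

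Next I would apply the restated multislope Gap Theorem. Starting from $\I=\{1,\dots,k\}$, I process the slopes in $S$ one at a time; for the current slope $v$, I run the recursive partitioning described after the Gap Theorem with the current collection of index sets, ordering monomials by the value $\alpha_j+v\beta_j$ and splitting whenever a gap of size exceeding $(2d(4d+1)-v)\binom{\ell}{2}$ appears. After processing all slopes, I obtain a partition $\I=\I_1\sqcup\cdots\sqcup\I_s$ with $s\le k$ (each split strictly increases the number of parts, and there are at most $k$ parts in total), and for each $v\in S$ and each part $\I_t$ the bound
\[
\bigl|(\alpha_{j_1}+v\beta_{j_1})-(\alpha_{j_2}+v\beta_{j_2})\bigr|\le \bigl(2d(4d+1)-v\bigr)\binom{|\I_t|-1}{2}
\]
holds for all $j_1,j_2\in\I_t$, since refining the partition only makes the index sets smaller and the Gap Theorem's hypothesis is monotone under shrinking. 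Because $g$'s root valuation $v$ lies in $S$, applying the Gap Theorem along the chain of splits that used this particular $v$ yields $\mult{g}{f}=\min_t\mult{g}{f_{|\I_t}}$; the parts obtained from splits at other slopes only refine this further, which is harmless since $\min$ over a refinement of a partition into which $g$'s support is "compatible" still computes the same multiplicity — more carefully, one argues that $g\mid f$ iff $g\mid f_{|\I_t}$ for every $t$, using the $v$-chain for the "only if" direction and triviality for "if", and similarly for each derivative. Finally, setting $f_t = f_{|\I_t}$ and $f_t^\circ$ its normalization, we have $\mult{g}{f_t}=\mult{g}{f_t^\circ}$ since $g$ is irreducible hence normalized and a monomial factor of $f_t$ cannot divide $g$; this gives the multiplicity identity with $\ell_t=|\I_t|$ and $\sum_t\ell_t=k$.

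It remains to bound $\deg(f_t^\circ)$. Within a single part $\I_t$, the exponent vectors $(\alpha_j,\beta_j)$ satisfy, simultaneously for the two extreme admissible slopes $v_1>v_2$ in $S$, the inequalities above; writing $E_t=2d(4d+1)\binom{|\I_t|-1}{2}$ (and noting $-v\le d$ so each right-hand side is at most $E_t + d\binom{|\I_t|-1}{2} = \O(d^2\binom{|\I_t|-1}{2})$), the two linear functionals $\alpha+v_1\beta$ and $\alpha+v_2\beta$ each vary by $\O(d^2\binom{|\I_t|-1}{2})$ over $\I_t$. Solving this $2\times 2$ linear system for $\alpha$ and $\beta$ and invoking Lemma~\ref{lemma:slopes} — which bounds $1/(v_1-v_2)\le d^2$ and $(v_1+v_2)/(v_1-v_2)\le 2d^2$ — shows that both $\alpha$ and $\beta$ vary by $\O(d^4\binom{|\I_t|-1}{2})$ over $\I_t$; after normalization the minimum exponents become $0$, so $\deg(f_t^\circ)=\O(d^4\binom{\ell_t-1}{2})$ as claimed. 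The algorithm's running time is polynomial in $k$ and $d$ since everything is arithmetic on the $\O(k)$ integer exponent pairs and the $\O(k)$ rational slopes. The main obstacle I expect is the bookkeeping in the degree bound: one must be careful that the constraint from two non-parallel slopes is genuinely available on each final part (it is, since the final partition refines the partition obtained by processing $v_1$ alone and also the one from $v_2$ alone), and that the case where an endpoint slope is vertical ($q=0$, handled by the separate $q=0$ variant of the Gap Theorem) or where $g$'s actual root valuation is not itself an endpoint of $S$ is still covered — which it is, because the hypothesis needs only that $g$'s valuation lies in $S$, not that it be extremal, while the degree bound uses any two non-parallel admissible slopes of $\Newt(f)$, which exist precisely because $g$ is not weighted homogeneous and hence forces $\Newt(f)$ to have non-parallel edges with small slopes.
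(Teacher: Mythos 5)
Your proposal identifies the right tools — the Gap Theorem, the two-slope spread bound, and Lemma~\ref{lemma:slopes} — but the way you combine them has a genuine gap, and it is exactly the point where the paper is more careful than you.

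You build a single partition of $\I$ by processing \emph{every} admissible slope $v\in S$ of $\Newt(f)$. But the Gap Theorem only gives the identity $\mult{g}{f}=\min(\mult{g}{f_1},\mult{g}{f_2})$ for a split at slope $v$ when $v$ is actually the valuation of a root of $g$ in $\puiseux X$. When you split along a slope $v'\in S$ that is \emph{not} a root valuation of $g$, that split is not licensed by the Gap Theorem, and refinements of a partition do \emph{not} in general preserve the min-multiplicity identity: it is perfectly possible to have $g^m\mid f_{|I}$ while $g\nmid f_{|I'}$ for some $I'\subsetneq I$ produced by an unauthorized split, which makes $\min_t\mult{g}{f_t^\circ}$ drop to $0$ even though $\mult{g}{f}=m$. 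Your statement that "the parts obtained from splits at other slopes only refine this further, which is harmless" is therefore the crux of the error; the "$v$-chain" argument only controls the pieces of the partition obtained by processing the root valuations of $g$ alone, not your fully refined partition.

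The paper avoids this by choosing exactly two non-parallel edges $e_1,e_2$ of $\Newt(g)$ (which, by Ostrowski, are also edges of $\Newt(f)$) and only splitting along those two slopes. To ensure both slopes correspond to root valuations, it runs a case analysis: two edges in the lower hull of $\Newt(g)$ is the base case; two edges in the upper hull is reduced to it via the reciprocal $g^X(X,Y)=X^{\deg_X g}g(1/X,Y)$; a vertical edge is reduced via a variable swap. Your proposal has no analogue of this reduction, and it is needed: $\Newt(g)$ may have only one edge in its lower hull, i.e.\ only one root valuation in $\puiseux X$, in which case you cannot get two root valuations of $g$ without transforming $f$ and $g$. (Algorithm~\ref{algo:nonhom} then ranges over all pairs of non-parallel edges of $\Newt(f)$, so for each possible $g$ the relevant pair is tried.)

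A secondary concern is your justification of the spread bound on the final pieces, namely that the bound $\bigl|(\alpha_{j_1}+v\beta_{j_1})-(\alpha_{j_2}+v\beta_{j_2})\bigr|\le\Delta\binom{|\I_t|-1}{2}$ holds on each final $\I_t$ for \emph{every} $v\in S$ "since the Gap Theorem's hypothesis is monotone under shrinking." This is not the right direction of monotonicity: shrinking a piece by a split at a \emph{different} slope does not shrink its $\alpha+v\beta$-spread in proportion to the new size, so the bound with $|\I_t|-1$ in the binomial does not transfer automatically from the ancestor piece to its sub-pieces. The paper sidesteps this by using only two splits, so one can track the spreads explicitly.
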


\begin{proof} 
Since $g$ is not weighted homogeneous, its Newton polygon is not contained in a line. Therefore, it has at least two non-parallel edges $e_1$ and $e_2$. The idea is to apply the Gap Theorem twice: first to $f$ with $e_1$ to get a partition $\I_1\sqcup\dotsb\sqcup\I_{s'}$ of $\I=\{1,\dotsc,k\}$, and then to each $f_t=f_{|\I_t}$ with $e_2$ to refine the partition. We shall then prove that this refined partition defines low-degree polynomials.  

There are three cases to handle: either $\Newt(g)$ has two edges in its lower hull, or it has two edges in its upper hull, or it has an edge in the lower hull and at least one vertical edge.
To simplify notations, let us define $D=2d(4d+1)$, $\Delta_1=D-v_1$ and $\Delta_2=D-v_2$.

The first case is simple. Let $-v_1$ be the slope of $e_1$ and $-v_2$ the slope of $e_2$, so that $g$ has a root of valuation $v_1$ and another one of valuation $v_2$ in $\puiseux X$. We can apply the Gap Theorem to $f$ with $v=v_1$ to partition $\I=\I_1\sqcup\dotsb\sqcup\I_t$, and then apply it to each $f_t=f_{|\I_t}$ with $v=v_2$ to partition each $\I_t$ as $\I_{t,1}\sqcup\dotsb\sqcup \I_{t,s_t}$. Consider one subset $\I_{t,u}$ and the corresponding polynomial $f_{t,u}=f_{|\I_{t,u}}$. Let us assume without loss of generality that $\alpha_i+v_i\beta_i=\min_{j\in\I_{t,u}}(\alpha_j+v_i\beta_j)$ for $i=1,2$. Then for all $j\in\I_{t,u}$ and for $i=1,2$, $\alpha_j+v_i\beta_j\le \alpha_i+v_i\beta_i+\Delta_i\binom{\ell}{2}$. Let $\ell_{t,u}=|\I_{t,u}|$. Then for all $p,q\in\I_{t,u}$,
\begin{align*}
\alpha_p-\alpha_q
    &=(\alpha_p-\alpha_1)+(\alpha_1-\alpha_q)\\
    &\le v_1(\beta_1-\beta_p)+\Delta_1\binom{\ell_{t,u}-1}{2}+v_1(\beta_q-\beta_1)\\
    &\le v_1(\beta_q-\beta_p)+\Delta_1\binom{\ell_{t,u}-1}{2}\text.
\end{align*}
This inequality still holds if we replace $v_1$ by $v_2$ and if $p$ and $q$ are exchanged. In other words, 
\[\alpha_q-\alpha_p\le v_2(\beta_p-\beta_q)+\Delta_2\binom{\ell_{t,u}-1}{2}\text.\]
We can sum both equations and reorganize to obtain
\[(\beta_p-\beta_q)(v_1-v_2)\le(\Delta_1+\Delta_2)\binom{\ell_{t,u}-1}{2}\text.\]
Since $p$ and $q$ can once again be exchanged, we conclude that for all $p$ and $q$,
\[\left|\beta_p-\beta_q\right|\le \frac{\Delta_1+\Delta_2}{\left|v_1-v_2\right|}\binom{\ell_{t,u}-1}{2}\text.\]
Using very similar arguments, one easily shows that
\[\left|\alpha_p-\alpha_q\right|\le \frac{|v_1|\Delta_2+|v_2|\Delta_1}{\left|v_1-v_2\right|}\binom{\ell_{t,u}-1}{2}\text.\]
By Lemma~\ref{lemma:slopes}, $|\alpha_p-\alpha_q|,|\beta_p-\beta_q|\le\O(d^4\binom{\ell_{t,u}-1}{2})$. Therefore the polynomial $f_{t,u}^\circ$ obtained after normalization of $f_{t,u}$ has $\ell_{t,u}$ nonzero terms and degree at most $\O(d^4\binom{\ell_{t,u}-1}{2})$. The theorem follows, with $s=\sum_t s_t$.

The next two cases actually reduce to the first one. For the second case, one can consider the reciprocals $f^X$ of $f$ and $g^X$ of $g$ with respect to the variable $X$, defined by $f^X(X,Y)=X^{\deg_X(f)}f(1/X,Y)=\sum_{j=1}^k c_j X^{\gamma_j}Y^{\beta_j}$ where $\gamma_j=\deg_X(f)-\alpha_j$ for all $j$ and similarly for $g^X$. Then $\mult{g}{f}=\mult{g^X}{f^X}$ and we can apply the first case since the lower hull of $\Newt(g^X)$ has two edges. The bounds on the degrees of the polynomials we obtain are still valid for their reciprocals.

The third case corresponds to $e_2$ being vertical. We simply invert the variables and consider $\bar f(X,Y)=f(Y,X)$ and $\bar g(X,Y)=g(Y,X)$. Then $\Newt(\bar g)$ must have two edges either in its lower hull or in its upper hull. This means that either the first or the second of the previous cases can be applied to $\bar f$ to still obtain the same bounds.
\end{proof} 

\begin{algo}\label{algo:nonhom} ~\\
\textbf{Input:} A polynomial $f\in\KK[X,Y]$ given in lacunary representation and an integer~$d$.\\
\textbf{Output:} The list $L$ of the degree-$d$ inhomogeneous factors of $f$, with their multiplicities.\\
\textbf{Oracle:} Given a degree-$\O(d^4k^2)$ polynomial $g\in\KK[X,Y]$, computes the irreducible factorization of $g$.
\begin{enumerate}
\item Compute $\Newt(f)$ and initialize $L\gets\emptyset$;
\item For each pair of non-parallel edges in $\Newt(f)$: 
    \begin{enumerate}
    \item Compute $f_1^\circ$, \dots, $f_s^\circ$ according to Theorem~\ref{thm:partition};
    \item For $t=1$ to $s$: Compute the list $L_t$ of degree-$d$ factors of $f_t^\circ$ using the oracle;
    \item $L\gets L\cup \bigcap_{t=1}^s L_t$. 
    \end{enumerate}
\item Return $L$.
\end{enumerate}
\end{algo}

\begin{proposition}
Algorithm~\ref{algo:nonhom} is correct. If $f$ has degree $D$ and $k$ nonzero terms, the algorithms uses at most $(k\log D+d)^{\O(1)}$ bit operations, and the sum of the degrees of the bivariate polynomials given to the oracle is at most $\O(d^4k^4)$.
\end{proposition}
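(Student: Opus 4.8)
The plan is to establish correctness in the usual two directions (soundness and completeness) and then to read off the two quantitative statements from Theorem~\ref{thm:partition} together with the superadditivity of $x\mapsto\binom x2$.

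\emph{Soundness.} I would fix one iteration of the main loop, i.e.\ one pair of non-parallel edges of $\Newt(f)$, and use that the procedure of Theorem~\ref{thm:partition} only refines the partition of the monomials of $f$: it produces a decomposition $f=\sum_t m_t f_t^\circ$ with each $m_t$ a monomial and each $f_t^\circ$ normalized, with $f_t=m_tf_t^\circ$. If $(g,\mu)\in\bigcap_t L_t$, then $g$ is irreducible of degree at most $d$ (it is returned by the factorization oracle) and $\mu=\min_t\mult g{f_t^\circ}$, so $g^\mu$ divides every $f_t$, hence $g^\mu\divides f$ and $\mu\le\mult gf$. After discarding from $L$ every $g$ whose Newton polygon is a segment — these are exactly the weighted homogeneous candidates, already computed by Algorithm~\ref{algo:quasihom} — one concludes that $L$ contains only inhomogeneous degree-$d$ factors of $f$, each recorded with a multiplicity at most $\mult gf$ (the recorded multiplicity being the maximum, over the pairs of edges that contribute, of $\min_t\mult g{f_t^\circ}$).

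\emph{Completeness.} Let $g$ be an inhomogeneous degree-$d$ factor of $f$ with $\mu=\mult gf\ge1$. Since $g$ is not weighted homogeneous, $\Newt(g)$ is two-dimensional and therefore has two non-parallel edges $e_1,e_2$ (a vertical side counting as an edge). By Ostrowski's Theorem $\Newt(f)=\Newt(g^\mu)+\Newt(f/g^\mu)$, and since Ostrowski's Theorem is compatible with the splitting of a Newton polygon into its lower hull, its upper hull and its vertical part (each part of a Minkowski sum being the Minkowski sum of the corresponding parts), each of $e_1,e_2$ is parallel to an edge of $\Newt(f)$ lying in the same part. Thus on the iteration corresponding to this pair the case distinction in the proof of Theorem~\ref{thm:partition} runs with exactly these edges, and since the partition it builds depends on an edge only through its slope, the theorem yields $\mult gf=\min_t\mult g{f_t^\circ}$. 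Hence $\mult g{f_t^\circ}\ge1$ for every $t$, so $g$ occurs in every $L_t$, $(g,\mu)\in\bigcap_t L_t\subseteq L$, and $g$ is inhomogeneous so it is not discarded. Together with soundness this shows that $L$ is exactly the list of inhomogeneous degree-$d$ factors of $f$ with their multiplicities.

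\emph{Complexity.} The polygon $\Newt(f)$ is computed in $(k\log D)^{\O(1)}$ bit operations and has at most $k$ edges, so the loop runs over $\O(k^2)$ pairs. For each pair, the algorithm of Theorem~\ref{thm:partition} runs in time polynomial in $k$ and $d$ and outputs $f_1^\circ,\dots,f_s^\circ$ with $\sum_t\ell_t=k$ and $\deg(f_t^\circ)=\O(d^4\binom{\ell_t-1}2)$; all remaining steps — normalizations, building the $L_t$, and the multiset union and intersection — are elementary manipulations of lists of integer exponents and coefficients, polynomially many in $k$, $\log D$ and $d$. This gives the bound $(k\log D+d)^{\O(1)}$. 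Finally, by superadditivity of $x\mapsto\binom x2$ and $\sum_t(\ell_t-1)\le k$, one has $\sum_t\deg(f_t^\circ)=\O\bigl(d^4\binom k2\bigr)=\O(d^4k^2)$ for each pair, and summing over the $\O(k^2)$ pairs bounds the total degree of the polynomials submitted to the oracle by $\O(d^4k^4)$.

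The main obstacle will not be the bookkeeping but the completeness half of the correctness argument: one must check that looping blindly over pairs of non-parallel edges of $\Newt(f)$, with no access to $g$, still reaches, for every inhomogeneous factor $g$, a pair on which both the hypotheses and the internal case split of Theorem~\ref{thm:partition} are in force. This rests on the compatibility of Ostrowski's Theorem with the decomposition of a Newton polygon into lower hull, upper hull and vertical edges, together with the observation that the Gap-Theorem partition sees an edge only through its slope.
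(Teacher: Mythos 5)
Your proof is correct and follows the same route as the paper: correctness via Ostrowski's Theorem and Theorem~\ref{thm:partition}, and the $\O(d^4k^4)$ bound by superadditivity of $\ell\mapsto\binom{\ell}{2}$ over the at most $\binom{k}{2}$ pairs of edges. You are more explicit than the paper (which compresses this to one sentence) about two points it passes over silently: that Minkowski sums respect the decomposition into lower hull, upper hull, and vertical edges --- needed so that the algorithm, which sees only $\Newt(f)$ and never $g$, lands in the right case of Theorem~\ref{thm:partition} for every inhomogeneous factor --- and that any weighted homogeneous $g$ the oracle may report must be filtered out of $L$, since Theorem~\ref{thm:partition} guarantees $\mult{g}{f}=\min_t\mult{g}{f_t^\circ}$ only for inhomogeneous~$g$.
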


\begin{proof}
The correctness follows from Ostrowski's Theorem and Theorem~\ref{thm:partition}. 
Furthermore, for each pair of edges, the polynomials $f_1^\circ$, \dots, $f_s^\circ$ have degree at most $\O(d^4\binom{\ell_t-1}{2})$ for all $1\le t\le s$, with $\sum_t\ell_t=k$. By superadditivity of the function $\ell\mapsto\binom{\ell}{2}$, $\sum_t\deg(f_t^\circ)\le\O(d^4\binom{k-1}{2})$. Since there are at most $\binom{k}{2}$ pairs of distinct edges, the result follows.
\end{proof}



\section{Multivariate polynomials} \label{sec:multivariate} 

To extend our method to multivariate polynomials $f\in\KK[X_1,\dotsc,X_n]$, a first idea consists in considering the $n$-dimensional Newton polytope of $f$. Yet the computation of the Newton polytope is not polynomial in $n$. Actually, we will use the $n(n-1)$ possible $2$-dimensional Newton polygons. For, we extend our definition of $\Newt(f)$: If $i_1\neq i_2$, $\Newt_{i_1,i_2}(f)$ is the Newton polygon of $f$ viewed as an element of $R[X_{i_1},X_{i_2}]$ where $R$ is the polynomial ring in the $(n-2)$ other variables over $\KK$. 

As for the case of bivariate polynomials, there exists a special case. This special case corresponds to factors $g$ whose $n$-dimensional support is contained in a line (and thus is $1$-dimensional). As for weighted homogeneous factors in the bivariate case, the computation of these factors reduces to univariate lacunary polynomials. Let us call these polynomials \emph{unidimensional polynomials}. Note first that for such a factor $g$, $\Newt_{i_1,i_2}(g)$ is contained in a line for all $i_1$ and $i_2$. Consider the Newton polygons $\Newt_{1,i}$ for all $i>1$. If $f$ has a unidimensional factor $g$ depending on $X_1$, there exists a corresponding pair of parallel edges in each $\Newt_{1,i}(g)$, which are horizontal if $g$ does not depend on $i$. Actually, these pairs of edges correspond to a same pair of edges in the $n$-dimensional Newton polytope of $g$. The algorithm to compute unidimensional factors \emph{depending on $X_1$} is as follows: Consider all the parallel edges in $\Newt_{1,2}(f)$. For each such pair, pick one of the edges (say in the lower hull or on the left if it is vertical) and denote by $(a_1,a_2)$ and $(b_1,b_2)$ its endpoints. Then, each $\Newt_{1,i}(f)$ should have an edge of endpoints $(a_1,a_i)$ and $(b_1,b_i)$ for some $a_i$ and $b_i$, as well as an edge parallel to this one if $a_i$ and $b_i$ are not both zero (in which case we are considering a factor which does not depend on $X_i$). Thus for each pair of parallel edges of $\Newt_{1,2}(f)$, we check if the corresponding edges exist in $\Newt_{1,i}(f)$ for $i>2$. Now if we view $f$ as a polynomial in $X_1$ and $X_2$, it is weighted homogeneous and we can apply the algorithm for bivariate polynomials to eliminate the variable $X_2$. In the same way we eliminate all the variables $X_i$ for $i=2$ to $n$ and we get univariate lacunary polynomials. If we have an oracle computing their low-degree factors, we can reconstruct, as in the bivariate case, the corresponding unidimensional factors, variable by variable. This gives all the factors depending on $X_1$. We apply the same algorithm forgetting the variable $X_1$ and replacing its role by $X_2$ to compute the factors depending on $X_2$ and not on $X_1$. We continue with all variables to get all the unidimensional factors. The running time of this algorithm is polynomial in $n$, $k$ and $\log(D)$ where $k$ is the number of nonzero terms in $f$ and $D$ its degree.

Let us now consider a \emph{multidimensional} factor $g$, that is a factor whose support is not contained in a line. Then for every variable $X_{i_1}$, there exists at least one variable $X_{i_2}$ such that $\Newt_{i_1,i_2}(g)$ is not contained in a line, but in one case: if $g$ does not depend on $X_{i_1}$. The idea of the algorithm is the following: For all variables $X_i$, $i>1$, consider the Newton polygons $\Newt_{1,i}(f)$. For each $i$, partition the set $\I=\{1,\dotsc,k\}$ into $\I_1\sqcup\dotsb\sqcup\I_s$ according to the pairs of non-parallel edges, as in the proof of Theorem~\ref{thm:partition}. Thus, we have $(n-1)$ partitions of $\I$. The idea is now to merge these partitions to build a single partition. For, suppose we have two partitions $\I=\bigsqcup_t\J^1_t$ and $\I=\bigsqcup_t\J^2_t$ that we want to merge. We define a new partition $\I=\bigsqcup_t\I_t$ recursively. Let $\I_1=\{1\}$. Then, for every $j\in\I_1$, if $j\in\J^1_t$ and $j\in\J^2_{t'}$, we replace $\I_1$ by $\I_1\cup\J^1_t\cup\J^2_{t'}$. Once every index $j$ in $\I_1$ has been treated, we take the smallest index $j\notin\I_1$ and define $\I_2=\{j\}$. We apply the same algorithm to $\I_2$ and recursively build a partition of $\I$. 

If two distinct indices $j_1$ and $j_2$ belong to a same subset $\J^i_t$ ($i=1$ or $2$) of a partition, we have $|\alpha_{1,j_1}-\alpha_{1,j_2}|\le Cd^4|\J^i_t|^2$ for some constant $C$ (cf. Theorem~\ref{thm:partition}). Consider then two indices $j_1$ and $j_2$ in a same subset $\I_t$ of the new partition. They can be joined by a path of indices such that two consecutive indices in this path belong to a same $\J^1_t$ or a same $\J^2_t$. In other words, there exist indices $u_1=j_1$, $u_2$, \dots, $u_{2m}=j_2$ such that $u_1,u_2\in\J^1_{t_1}$, $u_3,u_4\in\J^1_{t_3}$, \dots, $u_{2m-1},u_{2m}\in\J^1_{t_{2m-1}}$ on the one hand, and $u_2,u_3\in\J^2_{t_2}$, \dots, $u_{2m-2},u_{2m-1}\in\J^2_{t_{2m-2}}$ on the other hand, for some $t_1$, \dots, $t_{2m-1}$. Then,
\begin{align*}
|j_2-j_1|   &\le\sum_{p=1}^{2m-1} |\alpha_{1,u_p}-\alpha_{1,u_{p+1}}|\\
            &\le Cd^4(|\J^1_{t_1}|^2+|\J^2_{t_2}|^2+\dotsb+|\J^1_{t_{2m-1}}|^2)\text.
\end{align*}
We can assume without loss of generality that the $\J^1_{t_p}$'s are pairwise distinct, as well as the $\J^2_{t_p}$'s. 
Since the sum of the sizes of the $\J^1_t$'s, respectively of the $\J^2_t$'s, is bounded by $k$, and since the function $k\mapsto k^2$ is superadditive, $|\alpha_{1,j_2}-\alpha_{1,j_1}|\le 2Cd^4k^2$. 
This means that we can merge all partitions built using the Newton polygons $\Newt_{1,i}(f)$ to get a new partition $\I=\I_1\sqcup\dotsb\sqcup\I_s$ such that for all $t$ and $j_1,j_2\in\I_t$, $|\alpha_{1,j_1}-\alpha_{1,j_2}|\le\O(nd^4k^2)$. 

This new partition has the property that if we define the normalized polynomials $f_t^\circ=f_{|\I_t}^\circ$ for all $t$, then $\mult{g}{f}=\min_t(\mult{g}{f^\circ_t})$ for all degree-$d$ multidimensional polynomials depending on $X_1$. To include factors which do not depend on $X_1$, we simply have to ensure that two indices $j_1$ and $j_2$ such that $\alpha_{1,j_1}=\alpha_{1,j_2}$ belong to the same subset. This can be done by merging the partition $\I_1\sqcup\dotsb\sqcup\I_s$ with the partition induced by the equalities on $\alpha_{1,j}$. The bound on $|\alpha_{1,j_1}-\alpha_{1,j_2}|$ remains valid.

Now, we can replace $X_1$ by $X_2$ and refine the partition we have with the same algorithm, and so on with all variables. Let $\I=\I_1\sqcup\dotsb\sqcup\I_s$ be the final partition and let $f_t^\circ$ be the normalization of $f_{|\I_t}$ for all $t$. The degree of $f_t^\circ$ is at most $\O(nd^4k^2)$ in each variable, and for any irreducible multidimensional polynomial $g$ of degree at most $d$, $\mult{g}{f}=\min_t(\mult{g}{f^\circ_t})$. It only remains to factorize these low-degree polynomials.



\end{document}